\documentclass[amsmath,amssymb, aps, 11pt]{revtex4-2}

\usepackage{enumerate}
\usepackage{physics}
\usepackage{mathrsfs}
\usepackage{amsthm}
\usepackage{bm} 
\usepackage{listings}

\newtheorem{theorem}{Theorem}[section]
\newtheorem{proposition}[theorem]{Proposition}
\newtheorem{lemma}[theorem]{Lemma}

\newtheorem{assumption}[theorem]{Assumption}
\newtheorem{remark}[theorem]{Remark}
\usepackage{tikz,xcolor}
\usepackage{graphicx}
\usepackage{dcolumn}
\usepackage{bm}
\usepackage{hyperref}
\usepackage[title]{appendix}
\usepackage{physics}
\hypersetup{
    colorlinks=true,
    urlcolor= blue,
    citecolor=blue,
    linkcolor= blue}

\definecolor{lime}{HTML}{A6CE39}
\DeclareRobustCommand{\orcidicon}{%
	\begin{tikzpicture}
	\draw[lime, fill=lime] (0,0) 
	circle [radius=0.16] 
	node[white] {{\fontfamily{qag}\selectfont \tiny ID}};
	\draw[white, fill=white] (-0.0625,0.095) 
	circle [radius=0.007];
	\end{tikzpicture}
	\hspace{-2mm}
}

\foreach \x in {A, ..., Z}{%
	\expandafter\xdef\csname orcid\x\endcsname{\noexpand\href{https://orcid.org/\csname orcidauthor\x\endcsname}{\noexpand\orcidicon}}
}

\begin{document}

\preprint{APS/123-QED}

\title{Explicit Analytical Representations for the Schwarzschild Radial Equation\\
	via Hypergeometric and Frobenius Expansions}

\author{J. G. R. Valangelis\orcidA{}}%
 \email{guilhermexitado@ufpi.edu.br}
\affiliation{Departamento de Física, Universidade Federal do Piauí, Teresina, 64049-550, Piauí, Brazil}

\author{ Ailton C. Nascimento\orcidB{}}
\email{ailton.nascimento@ufpi.edu.br}
 \affiliation{Departamento de Matemática, Universidade Federal do Piauí, Teresina, 64049-550, Piauí, Brazil}

\author{Helder~A.~S.~Costa\orcidC{}}%
 \email{hascosta@ufpi.edu.br}
\affiliation{Departamento de Física, Universidade Federal do Piauí, Teresina, 64049-550, Piauí, Brazil}

\begin{abstract}
The massive Klein-Gordon equation on the Schwarzschild exterior reduces to a radial differential equation of confluent-Heun type. We present its exact reduction by rigorously retaining a subleading centrifugal term that preserves the horizon indicial exponents but critically shifts the accessory parameters. Within the Svartholm--Schmidt hypergeometric-expansion framework, we derive the three-term recurrence relation and establish the corrected continued-fraction compatibility condition matching the minimal tail to the lower-end recurrence. Crucially, we demonstrate that the physical compactification coordinate $z=(r-1)/r$ shifts the irregular singular point to $z=1$, inherently transforming the hypergeometric expansion into a five-term recurrence that cannot degenerate to three terms. To circumvent this obstruction, we construct the horizon-normalized physical branch via a direct Frobenius series at $z=0$. This representation exhibits high-order convergence and near-precision-floor residuals against the exact radial ODE, while clarifying why quasinormal-mode spectral selection remains a connection problem at the irregular singular endpoint.
\end{abstract}

\maketitle

\newpage

\tableofcontents

    \section{Introduction}

The propagation of a test field on a Schwarzschild background reduces, upon separation of variables, to a single linear ordinary differential equation for the radial profile. For a massive scalar field, the analytically extended radial equation has two regular singular points, at $r=0$ and the event horizon at $r=2M$, together with a rank-one irregular singular point at spatial infinity ($r \to \infty$). Consequently, the governing radial dynamics falls strictly within the class of the \emph{confluent Heun equation} \cite{RonveauxBook,SlavyanovLayBook,Decarreau1978,Fiziev2006}. Equations of the Heun family generally resist closed-form solutions in terms of classical hypergeometric functions; hence, constructing explicit, highly convergent representations is essential for spectral and scattering problems in black-hole perturbation theory, building on the classical analyses of Regge, Wheeler, and Zerilli \cite{ReggeWheeler1957,Zerilli1970}, with subsequent developments covering quasinormal modes and related spectral problems \cite{Berti2009,KonoplyaZhidenkoReview2011,Jansen2017}.

A foundational framework for analytic expansions of Heun solutions was developed through the works of Schmidt and Svartholm \cite{Schmidt1932,Svartholm1939}; see also \cite{RonveauxBook}. In this scheme, the radial amplitude is expanded over a basis of Gauss hypergeometric functions adapted to a regular singular endpoint. The expansion coefficients satisfy a three-term recurrence relation whose minimal solution is characterized by Perron's theory \cite{Perron1929}, with computational aspects of such recurrence relations thoroughly detailed by Gautschi \cite{Gautschi1967}. Matching this minimal tail to the lower-end recurrence yields an infinite continued-fraction compatibility condition, which underlies the continued-fraction approach developed by Leaver for black-hole perturbation spectra \cite{Leaver1985,Leaver1986}. Crucially, however, this algebraic matching does not automatically impose the required asymptotics at the irregular endpoint, which remains a fundamentally separate connection problem \cite{Wasow1965}.

In this work, we revisit the analytic structure of the massive scalar field on the Schwarzschild metric and clarify several subtleties in its Heun reduction and spectral representations. Specifically, this paper makes four main contributions:

\begin{enumerate}
    \item \textbf{Exact Confluent Heun Reduction and Accessory Parameters:} We derive the explicit confluent Heun reduction of the separated radial equation with fully corrected accessory parameters. In particular, we rigorously retain the subleading centrifugal/metric term $-1/[r^2(r-1)]$, which, while leaving the indicial exponents invariant, significantly shifts the accessory parameter and is indispensable for quantitative spectral computations.
    
    \item \textbf{Structured Svartholm--Schmidt Theory:} We reformulate the Svartholm--Schmidt expansion for the canonical confluent Heun operator in a systematic algebraic hierarchy: establishing the diagonal eigenrelation, the contiguous relations, the tridiagonal matrix reduction, and the resulting three-term recurrence. We strictly decouple the Perron minimal-tail asymptotics from the lower-end boundary match, yielding the exact continued-fraction selection equation.
    
    \item \textbf{Recurrence Structure under Physical Compactification:} We demonstrate that under the standard physical compactification coordinate $z=(r-1)/r$, where the irregular singularity maps to $z=1$, the reduced equation departs from the canonical single-pole form. Projecting onto the standard hypergeometric basis unavoidably generates a \emph{five-term} recurrence relation rather than a three-term one. We provide the closed-form operator identities and extreme bands, proving that this recurrence cannot truncate or degenerate to three terms.
    
    \item \textbf{Horizon Frobenius Branch and Numerical Validation:} Addressing the five-term obstruction, we construct the physical branch via a direct Frobenius series expansion around the event horizon. This provides an explicit, fast-converging local analytic solution on the domain $0 \le z < 1$, validated against high-precision numerical integration of the radial equation on compact exterior intervals.
\end{enumerate}

The remainder of this paper is structured as follows. 
In Sections~\ref{sec:reduction}--\ref{sec:compactified}, we carry out the separation of the massive Klein-Gordon equation on the Schwarzschild background, derive the exact confluent-Heun reduction with corrected accessory parameters, and introduce the compactified radial coordinate $z = (r-1)/r$. 
Sections~\ref{sec:svartholm}--\ref{sec:matching} are devoted to the foundational Svartholm-Schmidt framework for the canonical confluent-Heun operator: we establish the hypergeometric basis eigenrelation, perform the tridiagonal matrix reduction leading to the three-term recurrence, analyze the Perron minimal-tail asymptotics, and derive the corrected continued-fraction compatibility condition alongside its irregular connection problem. 
In Section~\ref{sec:fiveterm}, we analyze the physical compactified equation, establishing the necessity of the five-term recurrence relation and demonstrating why it cannot degenerate to three terms. 
Sections~\ref{sec14}--\ref{sec:plot-interpretation} present the explicit horizon-normalized Frobenius construction, its implementation in \textsc{Mathematica}, high-precision residual diagnostics against the exact radial ODE, and an analytical discussion on why the five-term recurrence serves as a finite connection object rather than a direct quasinormal-mode solver, complemented by a benchmark against standard Frobenius--Leaver continued fractions. 
Finally, Section~\ref{sec:closing} summarizes our conclusions and future outlook, while the complete, self-contained \textsc{Mathematica} routine is provided in Appendix~\ref{app:code}.
	
	\section{Reduction of the Schwarzschild radial equation to confluent-Heun form}\label{sec:radial}
	\label{sec:reduction}
    We consider the massive Klein-Gordon equation for a scalar test field propagating on the Schwarzschild exterior curved geometry \cite{BirrellDavies1982,Wald1984} (in units $2M=1$), written (after multiplication by $r^2$) as
	\begin{equation}\label{eq:SchwarzschildKG_PDE}
		-\frac{r^{2}}{1-\frac{1}{r}}\,\partial_{t}^{2}\psi
		+\partial_{r}\!\Bigl((r^{2}-r)\,\partial_{r}\psi\Bigr)
		+\Delta_{\mathbb S^{2}}\psi
		-m^{2}r^{2}\psi=0,
	\end{equation}
	where $\Delta_{\mathbb S^{2}}$ is the Laplace-Beltrami operator on the unit sphere,
	\[
	\Delta_{\mathbb S^{2}}
	=\frac{1}{\sin\theta}\partial_{\theta}\bigl(\sin\theta\,\partial_{\theta}\bigr)
	+\frac{1}{\sin^{2}\theta}\partial_{\varphi}^{2}.
	\]
	(If \eqref{eq:SchwarzschildKG_PDE} is written with $\partial_\varphi^2$ only, it is
	understood that the standard $\sin^{-2}\theta$ factor is included in the definition
	of the angular operator.)
	
	Let $Y_{\ell n}(\theta,\varphi)$ be spherical harmonics,
	$\Delta_{\mathbb S^{2}}Y_{\ell n}=-\ell(\ell+1)Y_{\ell n}$, and set
	\begin{equation}\label{eq:sep_ansatz}
		\psi(t,r,\theta,\varphi)=e^{-i\omega t}\,Y_{\ell n}(\theta,\varphi)\,\frac{R(r)}{r}.
	\end{equation}
	With $f(r)=1-\frac1r=\frac{r-1}{r}$, substitution of \eqref{eq:sep_ansatz} into
	\eqref{eq:SchwarzschildKG_PDE} yields the standard radial ODE
	\begin{equation}\label{eq:RadialOriginal_revised}
		\frac{d}{dr}\!\Bigl(f(r)\,R'(r)\Bigr)
		+\Bigl(\frac{\omega^{2}}{f(r)}-\frac{\ell(\ell+1)}{r^{2}}-m^{2}-\frac{1}{r^{3}}\Bigr)R(r)=0,
		\qquad r\in(1,\infty).
	\end{equation}
	Equivalently, in monic form,
	\begin{equation}\label{eq:RadialMonic_revised}
		R''(r)+\frac{1}{r(r-1)}R'(r)
		+\left(\frac{\omega^{2}r^{2}}{(r-1)^{2}}-\frac{\ell(\ell+1)}{r(r-1)}-\frac{m^{2}r}{r-1}-\frac{1}{r^{2}(r-1)}\right)R(r)=0.
	\end{equation}
	The last term $-1/(r^{2}(r-1))$ in the potential of \eqref{eq:RadialMonic_revised}
	(equivalently the term $-1/r^{3}$ in \eqref{eq:RadialOriginal_revised}) originates from the
	$1/r$ factor in the ansatz \eqref{eq:sep_ansatz}: the operator
	$\partial_r\!\bigl((r^2-r)\partial_r(R/r)\bigr)$ generates a contribution $-f'(r)R/r=-(1/r^{3})R$.
	It is subleading at both finite singular points (it does not alter the indicial
	exponents $\rho=\pm i\omega$ at $r=1$, nor the asymptotic relation $\kappa^{2}=m^{2}-\omega^{2}$
	at $r=\infty$), but it must be retained for the accessory parameters and for any quantitative
	solution to be correct.
	
	\paragraph{Singularity structure.}
	Equation \eqref{eq:RadialMonic_revised} has regular singular points at $r=0$ and $r=1$.
	The point $r=\infty$ is \emph{irregular} whenever $\omega^{2}\neq m^{2}$, reflecting the
	exponential/oscillatory asymptotics at spatial infinity. Consequently, the natural Heun
	class for \eqref{eq:RadialMonic_revised} is the \emph{confluent Heun family} (two regular
	finite singularities and one irregular singularity at infinity), rather than the general
	Heun family.
	
	\section{A confluent-Heun normalization}\label{sec:normalization}
	
	To expose the confluent-Heun structure and to prepare an explicit hypergeometric
	expansion, we factor out the canonical local behaviors at the horizon and at infinity.
	Let $\kappa$ be defined by
	\begin{equation}\label{eq:kappa_def}
		\kappa^{2}=m^{2}-\omega^{2},
	\end{equation}
	where one fixes a branch according to the desired boundary condition at $r=\infty$
	(decaying, outgoing, etc.). Moreover, the horizon exponents of \eqref{eq:RadialMonic_revised}
	are $\rho=\pm i\omega$, corresponding to ingoing/outgoing modes at $r=1$.
	
	\begin{proposition}[Confluent-Heun reduction]\label{prop:HeunC_reduction}
		Fix $\rho\in\{+i\omega,-i\omega\}$ and $\kappa$ as in \eqref{eq:kappa_def}. Define
		\begin{equation}\label{eq:gauge_transform}
			R(r)=e^{\kappa r}(r-1)^{\rho}\,H(r).
		\end{equation}
		Then $H$ satisfies a confluent-Heun-type equation of the form
		\begin{equation}\label{eq:HeunC_general_form}
			H''+\left(2\kappa+\frac{1}{r(r-1)}+\frac{2\rho}{r-1}\right)H'
			+\left(\frac{A_{0}}{r^{2}}+\frac{A_{1}}{r}+\frac{B_{1}}{r-1}\right)H=0,
		\end{equation}
		where the coefficients are explicit:
		\begin{align}
			A_{0}&=1, \label{eq:A0}\\
			A_{1}&=\ell(\ell+1)\;-\;\kappa\;+\;\rho\;+\;1,\label{eq:A1}\\
			B_{1}&=-\,\ell(\ell+1)\;+\;2\kappa\rho\;+\;\kappa\;-\;m^{2}\;-\;\rho\;+\;2\omega^{2}\;-\;1.\label{eq:B1}
		\end{align}
		In particular, \eqref{eq:HeunC_general_form} has regular singularities at $r=0$ and $r=1$
		and an irregular singularity at $r=\infty$, hence belongs to the confluent-Heun class.
	\end{proposition}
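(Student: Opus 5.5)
The plan is a direct gauge computation. I will write \eqref{eq:RadialMonic_revised} as $R''+P R'+Q R=0$ with
\[
P=\frac{1}{r(r-1)},\qquad
Q=\frac{\omega^{2}r^{2}}{(r-1)^{2}}-\frac{\ell(\ell+1)}{r(r-1)}-\frac{m^{2}r}{r-1}-\frac{1}{r^{2}(r-1)} .
\]
I then set $R=e^{\phi}H$ with $\phi=\kappa r+\rho\log(r-1)$, so that $\phi'=\kappa+\rho/(r-1)$. Substituting, $H$ satisfies
\[
H''+\bigl(P+2\phi'\bigr)H'+\bigl(\phi''+\phi'^{2}+P\phi'+Q\bigr)H=0 .
\]
The first-order coefficient is then immediate: $P+2\phi'=2\kappa+\frac{1}{r(r-1)}+\frac{2\rho}{r-1}$, exactly as in \eqref{eq:HeunC_general_form}.

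The substance is the zeroth-order coefficient, which I will reduce to partial fractions in the basis $\{1,\ (r-1)^{-2},\ (r-1)^{-1},\ r^{-2},\ r^{-1}\}$. The elementary decompositions needed are
\[
\frac{r^{2}}{(r-1)^{2}}=1+\frac{2}{r-1}+\frac{1}{(r-1)^{2}},\qquad
\frac{r}{r-1}=1+\frac{1}{r-1},\qquad
\frac{1}{r(r-1)}=\frac{1}{r-1}-\frac1r,
\]
\[
\frac{1}{r(r-1)^{2}}=\frac{1}{(r-1)^{2}}-\frac{1}{r-1}+\frac1r,\qquad
\frac{1}{r^{2}(r-1)}=\frac{1}{r-1}-\frac1r-\frac1{r^{2}} .
\]
The key cancellations will come from the choices of $\kappa$ and $\rho$.

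\begin{itemize}
\item \emph{Constant term.} It is $\kappa^{2}+\omega^{2}-m^{2}$, which vanishes by \eqref{eq:kappa_def}. This is what makes the coefficient decay, so that the Heun form is attained.
\item \emph{$(r-1)^{-2}$ term.} It collects $-\rho$ from $\phi''$, $\rho^{2}$ from $\phi'^{2}$, $+\rho$ from $P\phi'$ and $\omega^{2}$ from $Q$. These sum to $\rho^{2}+\omega^{2}$, which vanishes because $\rho=\pm i\omega$ is a horizon indicial exponent.
\end{itemize}

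The remaining poles give the three coefficients.

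\begin{itemize}
\item \emph{$r^{-2}$ coefficient.} Only the subleading term $-1/(r^{2}(r-1))$ contributes, giving $A_{0}=1$.
\item \emph{$r^{-1}$ coefficient.} I collect $-\kappa$ and $+\rho$ from $P\phi'$, $+\ell(\ell+1)$ from the centrifugal term and $+1$ from the subleading term. This yields \eqref{eq:A1}.
\item \emph{$(r-1)^{-1}$ coefficient.} I collect $2\kappa\rho$ from $\phi'^{2}$, $\kappa-\rho$ from $P\phi'$, $2\omega^{2}$, $-m^{2}$, $-\ell(\ell+1)$ and $-1$. This gives \eqref{eq:B1}.
\end{itemize}

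The only real risk is sign and bookkeeping errors in these partial-fraction collections, particularly the contributions of $P\phi'$ and of the $-1/(r^{2}(r-1))$ term, since this is precisely the term the paper emphasizes shifts $A_1$ and $B_1$. I will cross-check in two ways. First, the sum of residues $A_1+B_1$ must equal the coefficient of $1/r$ at infinity; from the expansion this is $2\kappa\rho+2\omega^{2}-m^{2}$, and it agrees with \eqref{eq:A1}+\eqref{eq:B1}. Second, I can test a numerical value of $r$.

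Finally, for the singularity classification:

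\begin{itemize}
\item \emph{$r=0$ and $r=1$.} The coefficient of $H'$ has at most simple poles there and that of $H$ at most double poles, so both points are regular singular by Fuchs' criterion.
\item \emph{$r=\infty$.} With $t=1/r$, the $H'$ coefficient tends to the nonzero constant $2\kappa$ rather than decaying like $2/r$. This makes $t=0$ an irregular point of rank one whenever $\kappa\neq0$, i.e.\ $\omega^{2}\neq m^{2}$.
\end{itemize}

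This places \eqref{eq:HeunC_general_form} in the confluent-Heun class.
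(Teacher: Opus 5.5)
Your proposal is correct and follows the same route as the paper. You substitute the gauge $R=e^{\kappa r}(r-1)^{\rho}H$, use $\kappa^{2}=m^{2}-\omega^{2}$ and $\rho^{2}=-\omega^{2}$ to remove the constant and $(r-1)^{-2}$ terms, and collect the remaining partial fractions; you carry out this bookkeeping explicitly where the paper only sketches it, and your contributions to $A_0$, $A_1$, $B_1$ and the check $A_1+B_1=2\kappa\rho+2\omega^{2}-m^{2}$ are all correct.
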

	
	\begin{proof}
		Insert \eqref{eq:gauge_transform} into \eqref{eq:RadialMonic_revised}, expand $R',R''$,
		and simplify. The choice $\rho=\pm i\omega$ cancels the $(r-1)^{-2}$ term in the reduced
		equation, ensuring that the horizon singularity is regular for $H$.
		The exponential factor $e^{\kappa r}$ isolates the dominant asymptotics at infinity and
		produces the constant term $2\kappa$ in the coefficient of $H'$, which is the hallmark of
		the irregular singularity at $r=\infty$ in confluent-Heun normalizations. Collecting the
		remaining terms yields \eqref{eq:HeunC_general_form} with coefficients
		\eqref{eq:A0}-\eqref{eq:B1}. We emphasize that $A_{0}=1\neq0$: the genuine $1/r^{2}$ term
		in the coefficient of $H$ is precisely the image of the centrifugal correction
		$-1/(r^{2}(r-1))$ in \eqref{eq:RadialMonic_revised}. The horizon exponents $\rho=\pm i\omega$
		are unchanged, since the dropped term is subleading at $r=1$.
	\end{proof}

One may also introduce an additional factor at the nonphysical singular point $r=0$ by
writing
\[
R(r)=e^{\kappa r}(r-1)^{\rho}r^{\sigma}H(r).
\]
The parameter $\sigma$ can then be chosen to match a preferred canonical normalization,
for instance the Maple/Mathematica \texttt{HeunC} convention. In the present exterior
problem, however, the physical domain is $r\in(1,\infty)$, and the minimal gauge
\eqref{eq:gauge_transform} already isolates the relevant horizon and infinity
behaviors. We therefore keep this normalization, which also leads to a simpler
hypergeometric and Frobenius analysis below.

	\section{Compactified coordinate and the physical reduced equation}\label{sec:compactified}
	
	For the hypergeometric expansion it is advantageous to work in the compact coordinate
	\begin{equation}\label{eq:z_change_revised}
		z=\frac{r-1}{r}=1-\frac1r,
		\qquad r=\frac{1}{1-z},
		\qquad r\in[1,\infty)\Longleftrightarrow z\in[0,1).
	\end{equation}
	Note that $z=0$ is the horizon and $z\to1^{-}$ corresponds to spatial infinity.
	
	\begin{lemma}\label{lem:derivative_change_revised}
		Under \eqref{eq:z_change_revised} one has
		\[
		\frac{d}{dr}=(1-z)^{2}\frac{d}{dz},
		\qquad
		\frac{d^{2}}{dr^{2}}=(1-z)^{4}\frac{d^{2}}{dz^{2}}-2(1-z)^{3}\frac{d}{dz}.
		\]
	\end{lemma}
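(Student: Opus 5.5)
The plan is to apply the chain rule directly to the change of variables \eqref{eq:z_change_revised}, treating $z$ as a function of $r$ via $z=1-1/r$.

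\textbf{First derivative.} Compute $dz/dr = 1/r^{2}$. Substituting $r=1/(1-z)$ gives $1/r^{2}=(1-z)^{2}$. For any smooth $F$, the chain rule $dF/dr=(dz/dr)\,dF/dz$ then yields the operator identity $\frac{d}{dr}=(1-z)^{2}\frac{d}{dz}$.

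\textbf{Second derivative.} Apply this operator to itself:
\[
\frac{d^{2}}{dr^{2}}=(1-z)^{2}\frac{d}{dz}\Bigl((1-z)^{2}\frac{d}{dz}\Bigr).
\]
Expand by the product rule, using $\frac{d}{dz}(1-z)^{2}=-2(1-z)$. This gives
\[
(1-z)^{2}\Bigl[(1-z)^{2}\frac{d^{2}}{dz^{2}}-2(1-z)\frac{d}{dz}\Bigr]
=(1-z)^{4}\frac{d^{2}}{dz^{2}}-2(1-z)^{3}\frac{d}{dz},
\]
which is the claimed formula.

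The only point needing care is the sign of the first-order term. It comes from differentiating the prefactor $(1-z)^{2}$, so the cross term must not be dropped. As a cross-check, I would verify the result on the test function $F=z=1-1/r$. Directly in $r$, $d^{2}z/dr^{2}=-2/r^{3}$. The formula gives $-2(1-z)^{3}=-2/r^{3}$, so the two agree. No earlier results are needed beyond the definition \eqref{eq:z_change_revised}. The identities hold on $z\in[0,1)$, where the map $r\mapsto z$ is a smooth diffeomorphism from $[1,\infty)$.
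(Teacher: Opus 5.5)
Your proposal is correct and follows the paper's proof: compute $dz/dr=1/r^{2}=(1-z)^{2}$ to get the first-order identity, then apply the operator to itself with the product rule to get the second-order one. Your explicit expansion and the check on $F=z$ just spell out the step the paper summarizes as ``differentiating once more.''
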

	
	\begin{proof}
		Since $z=1-\frac1r$, we have $dz/dr=1/r^{2}=(1-z)^{2}$, hence $\frac{d}{dr}=(dz/dr)\frac{d}{dz}$.
		Differentiating once more gives the stated identity for $\frac{d^{2}}{dr^{2}}$.
	\end{proof}
	
	Combining Proposition \ref{prop:HeunC_reduction} with Lemma \ref{lem:derivative_change_revised}
	transports \eqref{eq:HeunC_general_form} into a second-order ODE on $z\in[0,1)$ with a regular
	singularity at $z=0$ and an irregular singularity at $z=1$. Explicitly, clearing denominators,
	the reduced equation reads
	\begin{equation}\label{eq:zform_correct}
		z(z-1)^{3}\,H''(z)+c_{1}(z)\,H'(z)+c_{0}(z)\,H(z)=0,
		\qquad z\in[0,1),
	\end{equation}
	with the polynomial coefficients
	\begin{align}
		c_{1}(z)&=3z^{3}+(2\kappa-2\rho-7)z^{2}+(-2\kappa+4\rho+5)z-(2\rho+1),\label{eq:zform_c1}\\
		c_{0}(z)&=z^{2}+\bigl(-\ell(\ell+1)+\kappa-\rho-2\bigr)z
		+\bigl(\ell(\ell+1)-2\kappa\rho-\kappa+m^{2}-2\omega^{2}+\rho+1\bigr).\label{eq:zform_c0}
	\end{align}
	Note that the leading coefficient $z(z-1)^{3}$ vanishes to third order at $z=1$. Dividing
	\eqref{eq:zform_correct} by $z(z-1)$ to expose the drift/potential, the coefficient of $H'$
	acquires a \emph{double} pole and the coefficient of $H$ a \emph{triple} pole at $z=1$,
	confirming that $z=1$ (the image of spatial infinity) is an \emph{irregular} singular point.
	At $z=0$ the equation remains regular singular with indicial equation $s(s+2\rho)=0$,
	consistent with the choice \eqref{eq:gamma_choice} below, $\gamma=1+2\rho$.
	
	\begin{equation}\label{eq:gamma_choice}
		\gamma=1+2\rho .
	\end{equation}
	
	\begin{remark}[Standard form versus physical equation]\label{rem:standard_vs_physical}
		Equation \eqref{eq:zform_correct} is \emph{not} of the simple-pole confluent-Heun
		standard form \eqref{eq:HeunC_Schw} studied in Sections \ref{sec:SS}-\ref{sec:recurrence};
		the latter places its irregular point at $z=\infty$, whereas the physical reduction places
		it at $z=1$. This single fact organizes the rest of the paper. The Gauss-hypergeometric
		(Jacobi-type) expansion developed in Part II is therefore the Svartholm-Schmidt theory
		for the \emph{abstract} standard form; it produces the classical three-term recurrence.
		The \emph{physical} Schwarzschild equation \eqref{eq:zform_correct}, expanded in the same
		hypergeometric basis, obeys instead a \emph{five-term} recurrence
		(Theorem \ref{thm:SS_fiveterm}), and its horizon-normalized branch is constructed directly
		from \eqref{eq:zform_correct} by the Frobenius method of Section \ref{sec14}, whose accuracy
		we validate against the original radial equation \eqref{eq:RadialMonic_revised} on compact
		exterior intervals in
		Sections\ref{sec:interpret-output}-\ref{sec:plot-interpretation}. The single datum that
		carries over from \eqref{eq:zform_correct} to \eqref{eq:HeunC_Schw} is the residue at $z=0$,
		$\Gamma=\gamma=1+2\rho$.
	\end{remark}

	\section{Svartholm-Schmidt theory for the standard confluent-Heun operator}
    \label{sec:svartholm}
	
	\subsection{The standard confluent-Heun operator and its hypergeometric basis}\label{sec:SS}
	
	Throughout Part II we work with the \emph{abstract} confluent-Heun equation in the
	standard form
	\begin{equation}\label{eq:HeunC_Schw}
		H''(z)
		+\left(\alpha_{\mathrm H}+\frac{\Gamma}{z}+\frac{\Delta}{z-1}\right)H'(z)
		+\left(\frac{\mu}{z}+\frac{\eta}{z-1}\right)H(z)=0,
		\qquad z\in(0,1),
	\end{equation}
	with the five parameters
	\begin{equation}\label{eq:HeunC_params}
		\bigl(\alpha_{\mathrm H},\Gamma,\Delta,\mu,\eta\bigr).
	\end{equation}

	\begin{assumption}[Generic hypergeometric parameters]\label{ass:generic_SS}
		Throughout the Svartholm-Schmidt reduction we work away from the exceptional parameter
		values at which the hypergeometric basis or the contiguous coefficients degenerate. More
		precisely, we assume
		\[
		\Gamma\notin\{0,-1,-2,\ldots\},
		\]
		and, for every index $n\ge0$ used in the expansion,
		\[
		2n+\Omega\neq0,\qquad 2n+\Omega+1\neq0,\qquad 2n+\Omega-1\neq0,
		\]
		where $\Omega=\Gamma+\Delta-1$. We also exclude, when using the lower-end recurrence and
		continued fractions, the isolated parameter values for which the corresponding recurrence
		denominators vanish. Exceptional cases may be treated separately by limiting or
		renormalized bases, but they are not part of the generic Svartholm-Schmidt construction
		developed below.
	\end{assumption}

	The form \eqref{eq:HeunC_Schw} has simple poles at $z=0,1$ and its irregular point at
	$z=\infty$. As stressed in Remark \ref{rem:standard_vs_physical}, the physical
	Schwarzschild reduction \eqref{eq:zform_correct} is \emph{not} of this form; the
	development of this Part therefore establishes the Svartholm-Schmidt machinery for the
	standard model \eqref{eq:HeunC_Schw} per se, with $\Gamma=1+2\rho$ inherited from the
	horizon data, while the horizon-normalized physical branch is built in Part III.
	
	\subsection{Operator formulation}\label{subsec:SS_operator}
	
	Multiplying \eqref{eq:HeunC_Schw} by $z(z-1)$ we write the equation as
	\begin{equation}\label{eq:HeunC_operator_form}
		\mathcal{L}H=0,
		\qquad
		\mathcal{L}:=\Lambda_1+\alpha_{\mathrm H}\Lambda_2+\bigl(\mu(z-1)+\eta z\bigr),
	\end{equation}
	where the differential operators $\Lambda_1$ and $\Lambda_2$ are
	\begin{equation}\label{eq:Lambda_defs}
		\Lambda_1:=z(z-1)\frac{d^2}{dz^2}+\bigl(\Gamma(z-1)+\Delta z\bigr)\frac{d}{dz},
		\qquad
		\Lambda_2:=z(z-1)\frac{d}{dz}.
	\end{equation}
	Thus $\mathcal{L}$ is built from $\Lambda_1$, $\Lambda_2$, multiplication by $z$, and a
	scalar; this algebraic structure is exactly what will make the Svartholm-Schmidt
	expansion tridiagonal.
	
	\subsection{Hypergeometric basis and eigenrelation}\label{subsec:SS_basis}
	
	To avoid collision with the physical frequency $\omega$, set
	\begin{equation}\label{eq:Omega_def}
		\Omega:=\Gamma+\Delta-1,
	\end{equation}
	the \emph{method parameter} of the hypergeometric basis. For each index
	$n\in\mathbb{N}_0$ define
	\begin{equation}\label{eq:yn_def}
		y_n(z):={}_2F_1\bigl(-n,\;n+\Omega;\;\Gamma;\;z\bigr).
	\end{equation}
	Under Assumption \ref{ass:generic_SS}, this is a polynomial of degree $n$ and the family
	$\{y_n\}_{n\ge0}$ is a graded basis of $\mathbb{C}[z]$.

	\begin{lemma}[Eigenrelation for $\Lambda_1$]\label{lem:Lambda1_eig}
		The basis functions \eqref{eq:yn_def} satisfy
		\begin{equation}\label{eq:Lambda1eig}
			\Lambda_1 y_n=n(n+\Omega)\,y_n,
			\qquad n\in\mathbb{N}_0.
		\end{equation}
	\end{lemma}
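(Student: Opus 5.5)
The plan is to reduce \eqref{eq:Lambda1eig} to the Gauss hypergeometric equation. Recall that $w={}_2F_1(a,b;c;z)$ satisfies
\[
z(1-z)w''+\bigl(c-(a+b+1)z\bigr)w'-ab\,w=0 .
\]
Multiplying by $-1$ gives the equivalent form
\[
z(z-1)w''+\bigl((a+b+1)z-c\bigr)w'+ab\,w=0 .
\]
I will take $a=-n$, $b=n+\Omega$, $c=\Gamma$, which are exactly the parameters in \eqref{eq:yn_def}. Then $ab=-n(n+\Omega)$. Using $\Omega=\Gamma+\Delta-1$ from \eqref{eq:Omega_def}, the drift coefficient becomes $(a+b+1)z-c=(\Omega+1)z-\Gamma=(\Gamma+\Delta)z-\Gamma$.

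The first step is to compare this drift with the first-order part of $\Lambda_1$ in \eqref{eq:Lambda_defs}. There one has $\Gamma(z-1)+\Delta z=(\Gamma+\Delta)z-\Gamma$, so the two coefficients coincide identically. The hypergeometric equation for $y_n$ can therefore be read as
\[
\Lambda_1 y_n - n(n+\Omega)\,y_n=0 ,
\]
which is precisely \eqref{eq:Lambda1eig}.

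The second step is to justify that $y_n$ really solves the Gauss equation, rather than merely invoking the formula. Assumption \ref{ass:generic_SS} gives $\Gamma\notin\{0,-1,-2,\ldots\}$, so the series is well defined, and $a=-n$ truncates it to a polynomial of degree $n$. As a self-contained check, I will write $y_n=\sum_{k=0}^n t_k z^k$ with
\[
t_{k+1}/t_k=\frac{(k-n)(k+n+\Omega)}{(k+1)(k+\Gamma)} .
\]
I will then verify that the coefficient of $z^k$ in $\Lambda_1 y_n-n(n+\Omega)y_n$ is
\[
\bigl[k(k-1)+(\Gamma+\Delta)k-n(n+\Omega)\bigr]t_k-\bigl[(k+1)k+\Gamma(k+1)\bigr]t_{k+1}.
\]
The first bracket factors as $(k-n)(k+n+\Omega)$ and the second as $(k+1)(k+\Gamma)$, so this coefficient vanishes by the ratio above. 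The top coefficient $z^{n}$ also vanishes, because its bracket contains the factor $(k-n)$ at $k=n$.

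The case $n=0$ is trivial, since $y_0=1$ and $\Lambda_1 1=0$. The only real obstacle is sign bookkeeping: $z(z-1)$ and $z(1-z)$ differ by a sign, and the eigenvalue must come out as $+n(n+\Omega)$, which the termwise check confirms.
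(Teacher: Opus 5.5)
Your proof is correct and follows the same route as the paper: you specialize the Gauss equation to $a=-n$, $b=n+\Omega$, $c=\Gamma$, flip the sign of $z(1-z)$, and use $\Omega=\Gamma+\Delta-1$ to identify the drift $(\Omega+1)z-\Gamma$ with $\Gamma(z-1)+\Delta z$; the paper phrases this last step as the vanishing of the leftover term $(\Gamma+\Delta-\Omega-1)z\,y_n'$. Your termwise check of the $z^k$ coefficients, via $(k+1)(k+\Gamma)t_{k+1}=(k-n)(k+n+\Omega)t_k$ (best stated in this multiplicative form, so that no division by a possibly vanishing $t_k$ is needed), is a correct self-contained addition that the paper omits by citing the Gauss equation directly.
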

	
	\begin{proof}
		The Gauss function $u={}_2F_1(a,b;c;z)$ solves
		$z(1-z)u''+\bigl(c-(a+b+1)z\bigr)u'-ab\,u=0$. With $a=-n$, $b=n+\Omega$, $c=\Gamma$ one
		has $a+b+1=\Omega+1$ and $ab=-n(n+\Omega)$, so
		\begin{equation}\label{eq:Gauss_yn}
			z(1-z)y_n''+\bigl(\Gamma-(\Omega+1)z\bigr)y_n'+n(n+\Omega)y_n=0 .
		\end{equation}
		Since $z(z-1)=-z(1-z)$, \eqref{eq:Gauss_yn} gives
		$z(z-1)y_n''=\bigl(\Gamma-(\Omega+1)z\bigr)y_n'+n(n+\Omega)y_n$. Substituting into
		$\Lambda_1y_n=z(z-1)y_n''+\bigl(\Gamma(z-1)+\Delta z\bigr)y_n'$ yields
		\[
		\Lambda_1y_n=n(n+\Omega)y_n+\bigl(\Gamma+\Delta-\Omega-1\bigr)z\,y_n'
		=n(n+\Omega)y_n,
		\]
		because $\Omega=\Gamma+\Delta-1$ forces the bracket to vanish.
	\end{proof}
	
	\begin{remark}\label{rem:Jacobi}
		For $n\in\mathbb{N}_0$ the function $y_n$ is a polynomial of degree $n$, proportional to
		a Jacobi polynomial,
		\[
		y_n(z)\propto P_n^{(\Omega-\Gamma,\;\Gamma-1)}(2z-1).
		\]
		In particular $\{y_n\}_{n\ge0}$ is a graded basis of $\mathbb{C}[z]$ (one polynomial of
		each degree), so every polynomial of degree $\le N$ has a unique expansion in
		$\{y_0,\dots,y_N\}$.
	\end{remark}
	
	\section{Three-term relations and the tridiagonal reduction}\label{sec:tridiagonal}
	
	The next lemma is the algebraic heart of the method: multiplication by $z$ and the
	first-order operator $\Lambda_2$ preserve the span of three consecutive basis elements.
	
	\begin{lemma}[Three-term expansions]\label{lem:SS_3term}
		For each $n\in\mathbb{N}_0$,
		\begin{align}
			z\,y_n&=A_n\,y_{n+1}+B_n\,y_n+C_n\,y_{n-1},\label{eq:z_three}\\[3pt]
			\Lambda_2 y_n&=A'_n\,y_{n+1}+B'_n\,y_n+C'_n\,y_{n-1},\label{eq:Lambda2_three}
		\end{align}
		where
		\begin{equation}\label{eq:ABC_coeffs}
			\begin{aligned}
				A_n&=-\frac{(n+\Omega)(n+\Gamma)}{(2n+\Omega)(2n+\Omega+1)},\\[3pt]
				B_n&=\frac{2n(n+\Omega)+\Gamma(\Omega-1)}{(2n+\Omega+1)(2n+\Omega-1)},\\[3pt]
				C_n&=-\frac{n(n+\Delta-1)}{(2n+\Omega)(2n+\Omega-1)},
			\end{aligned}
			\qquad
			\begin{aligned}
				A'_n&=-\frac{n(n+\Omega)(n+\Gamma)}{(2n+\Omega)(2n+\Omega+1)},\\[3pt]
				B'_n&=\frac{n(n+\Omega)(\Gamma-\Delta)}{(2n+\Omega+1)(2n+\Omega-1)},\\[3pt]
				C'_n&=\frac{n(n+\Omega)(n+\Delta-1)}{(2n+\Omega)(2n+\Omega-1)}.
			\end{aligned}
		\end{equation}
		In particular $A'_n=n\,A_n$ and $C'_n=-(n+\Omega)\,C_n$.
	\end{lemma}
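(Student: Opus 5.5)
Since $\{y_n\}$ is a graded basis of $\mathbb{C}[z]$ (Remark~\ref{rem:Jacobi}), both $z\,y_n$ and $\Lambda_2 y_n$ are polynomials of degree at most $n+1$, so each has a unique expansion in $y_0,\dots,y_{n+1}$. The plan has two parts: first show that only $y_{n+1},y_n,y_{n-1}$ occur, then identify the three coefficients. For the first part I would use orthogonality. The $y_n$ are Jacobi polynomials in $2z-1$ (Remark~\ref{rem:Jacobi}), hence orthogonal for the weight $w(z)=z^{\Gamma-1}(1-z)^{\Delta-1}$ on $(0,1)$ when parameters are real and in the convergent range. The operator $\Lambda_1$ is formally symmetric with respect to $w$, since $(w\,z(1-z))'=w\,(\Gamma-(\Omega+1)z)$. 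Multiplication by $z$ is self-adjoint for $w$, so $\langle z y_n,y_k\rangle=\langle y_n,z y_k\rangle=0$ for $k<n-1$, which gives \eqref{eq:z_three}. Since the coefficients are rational in $(\Gamma,\Delta)$, the identity then extends to generic parameters by analytic continuation, under Assumption~\ref{ass:generic_SS}.

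For $\Lambda_2$ I would avoid a separate adjointness argument and instead use a commutator identity. A direct computation gives
\[
[\Lambda_1,z]=2z(z-1)\frac{d}{dz}+\Gamma(z-1)+\Delta z=2\Lambda_2+\Omega z-\Gamma+ z\cdot 0,
\]
more precisely $[\Lambda_1,z]\,y=2\Lambda_2 y+\bigl((\Gamma+\Delta)z-\Gamma\bigr)y$. Applying this to $y_n$, and using Lemma~\ref{lem:Lambda1_eig} together with \eqref{eq:z_three}, gives
\[
2\Lambda_2y_n=\sum_{k=n-1}^{n+1}\bigl[k(k+\Omega)-n(n+\Omega)\bigr]X_k\,y_k-\bigl((\Omega+1)z-\Gamma\bigr)y_n ,
\]
where $X_k\in\{A_n,B_n,C_n\}$. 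The right-hand side is again three-term, which proves \eqref{eq:Lambda2_three} and expresses $A'_n,B'_n,C'_n$ linearly in $A_n,B_n,C_n$. In particular
\[
2A'_n=\bigl[(n+1)(n+1+\Omega)-n(n+\Omega)-(\Omega+1)\bigr]A_n=2nA_n,
\]
so $A'_n=nA_n$. Similarly,
\[
2C'_n=\bigl[(n-1)(n-1+\Omega)-n(n+\Omega)-(\Omega+1)\bigr]C_n=-2(n+\Omega)C_n,
\]
so $C'_n=-(n+\Omega)C_n$. Finally $B'_n=\tfrac12\bigl[\Gamma-(\Omega+1)B_n\bigr]$.

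It remains to compute $A_n,B_n,C_n$.
\begin{itemize}
\item[(i)] $A_n$ is the ratio of leading coefficients. The leading coefficient of $y_n$ is $(-1)^n\dfrac{(n+\Omega)_n}{(\Gamma)_n}$, and the ratio simplifies to the stated $A_n$.
\item[(ii)] Evaluating \eqref{eq:z_three} at $z=0$, where $y_k(0)=1$, gives $A_n+B_n+C_n=0$.
\item[(iii)] Comparing the coefficients of $z^n$ gives a second linear relation. Alternatively, evaluate at $z=1$ using Chu--Vandermonde, $y_k(1)=\dfrac{(\Gamma-\Omega-k)_k}{(\Gamma)_k}=\dfrac{(1-\Delta-k)_k}{(\Gamma)_k}$. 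This gives $1=A_n\frac{y_{n+1}(1)}{y_n(1)}+B_n+C_n\frac{y_{n-1}(1)}{y_n(1)}$, whose ratios are simple rational functions, namely $-\dfrac{n+\Delta}{n+\Gamma}$ and $-\dfrac{n-1+\Gamma}{n-1+\Delta}$ respectively.
\end{itemize}
Solving (ii) and (iii) for $B_n,C_n$ and simplifying should yield the stated formulas. One then checks $B'_n$ against $\tfrac{n(n+\Omega)(\Gamma-\Delta)}{(2n+\Omega+1)(2n+\Omega-1)}$ by direct algebra.

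The main obstacle is the rational simplification in step (iii) and for $B'_n$, i.e.\ checking that the combination collapses to the stated factored forms with denominators $(2n+\Omega\pm1)$. As a fallback for that step I would compare with the classical Jacobi three-term recurrence, mapped via $x=2z-1$.
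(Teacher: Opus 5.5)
Your proof is correct, but it takes a different route from the paper's. The paper's proof is essentially by citation. It writes $z=\tfrac12\bigl((2z-1)+1\bigr)$ and imports the classical Jacobi three-term recurrence for $P_n^{(\Omega-\Gamma,\Gamma-1)}(2z-1)$ to get $A_n,B_n,C_n$. For $\Lambda_2$ it differentiates with $\frac{d}{dz}{}_2F_1(a,b;c;z)=\frac{ab}{c}\,{}_2F_1(a+1,b+1;c+1;z)$, re-expands in the original basis through further Jacobi relations, and then reads off $A'_n=nA_n$ and $C'_n=-(n+\Omega)C_n$ ``by inspection''.

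You do three things differently.
\begin{itemize}
\item You get the band structure from orthogonality for the weight $z^{\Gamma-1}(1-z)^{\Delta-1}$, followed by rational continuation in the parameters.
\item You determine $A_n,B_n,C_n$ intrinsically from the leading coefficient, the value at $z=0$, and Chu--Vandermonde at $z=1$.
\item You never expand $\Lambda_2 y_n$ directly. The commutator identity $[\Lambda_1,z]=2\Lambda_2+(\Omega+1)z-\Gamma$ transfers the $z$-expansion to $\Lambda_2$.
\end{itemize}

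Your route is self-contained and reduces six coefficients to three. It also makes the two ``in particular'' relations structural rather than observed: they are the eigenvalue gaps $2n+\Omega+1$ and $-(2n+\Omega-1)$, each shifted by $-(\Omega+1)$. The paper's route is shorter but leaves every closed form, notably $B'_n$, to be checked against tables.

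Your route has two small costs, both harmless by the same rationality argument you already invoke.
\begin{itemize}
\item The orthogonality step needs a parameter range, hence the continuation argument.
\item Step (iii) divides by $y_n(1)=(-1)^n(\Delta)_n/(\Gamma)_n$, so it needs $(\Delta)_n\neq0$, which Assumption~\ref{ass:generic_SS} does not guarantee. Comparing $z^n$-coefficients instead avoids this.
\end{itemize}

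The step you flag as the main obstacle is actually short.
\begin{itemize}
\item Eliminating $B_n$ with (ii) and using $\Gamma+\Delta=\Omega+1$ turns (iii) into $1=\frac{n+\Omega}{2n+\Omega}-C_n\frac{2n+\Omega-1}{n+\Delta-1}$. This gives $C_n$ immediately.
\item Then $-A_n-C_n$ has numerator $(2n+\Omega)\bigl[2n(n+\Omega)+\Gamma(\Omega-1)\bigr]$ over $(2n+\Omega)\bigl((2n+\Omega)^2-1\bigr)$, which is the stated $B_n$.
\item Finally, $\Gamma\bigl((2n+\Omega)^2-1\bigr)-(\Omega+1)\bigl(2n(n+\Omega)+\Gamma(\Omega-1)\bigr)=2n(n+\Omega)(2\Gamma-\Omega-1)$. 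Since $2\Gamma-\Omega-1=\Gamma-\Delta$, your formula $B'_n=\tfrac12\bigl[\Gamma-(\Omega+1)B_n\bigr]$ gives the stated $B'_n$.
\end{itemize}

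One slip: in your first commutator line, $2\Lambda_2+\Omega z-\Gamma$ should read $2\Lambda_2+(\Omega+1)z-\Gamma$. Your ``more precisely'' version is correct and is the one you use afterwards, so nothing downstream is affected.
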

	
	\begin{proof}
		By Remark \ref{rem:Jacobi}, $z\,y_n$ and $\Lambda_2 y_n$ are polynomials of degree
		$\le n+1$, hence (uniquely) expansible in $\{y_0,\dots,y_{n+1}\}$. To see that only the
		nearest neighbours survive and to obtain the closed forms, write
		$z=\tfrac12\bigl((2z-1)+1\bigr)$ and apply the classical Jacobi three-term recurrence
		to $y_n\propto P_n^{(\Omega-\Gamma,\Gamma-1)}(2z-1)$; this yields \eqref{eq:z_three}
		with the stated $A_n,B_n,C_n$. Differentiating \eqref{eq:yn_def} via the contiguous
		relation $\frac{d}{dz}{}_2F_1(a,b;c;z)=\frac{ab}{c}{}_2F_1(a+1,b+1;c+1;z)$ \cite{NIST_DLMF} and
		re-expressing $z(z-1)y_n'$ in the basis through the same Jacobi relations \cite{NIST_DLMF} gives
		\eqref{eq:Lambda2_three}. Each identity equates two polynomials whose coefficients are
		rational in $n$ and which agree for every $n\in\mathbb{N}_0$, hence holds as a
		rational-function identity in $n$; the relations $A'_n=nA_n$, $C'_n=-(n+\Omega)C_n$
		follow by inspection of \eqref{eq:ABC_coeffs}.
	\end{proof}
	
	Since $\Lambda_1$ acts diagonally (Lemma \ref{lem:Lambda1_eig}) and both $z$ and
	$\Lambda_2$ act tridiagonally (Lemma \ref{lem:SS_3term}), the operator $\mathcal{L}$ of
	\eqref{eq:HeunC_operator_form} maps $y_n$ into $\mathrm{span}\{y_{n-1},y_n,y_{n+1}\}$.
	
	\begin{proposition}[Tridiagonal reduction]\label{prop:DEF_confluent}
		With $\beta_1:=\mu+\eta$ and $\beta_0:=-\mu$ (so that $\mu(z-1)+\eta z=\beta_1 z+\beta_0$),
		\begin{equation}\label{eq:DEF_relation}
			\mathcal{L}y_n=D_n\,y_{n-1}+E_n\,y_n+F_n\,y_{n+1},
		\end{equation}
		where
		\begin{equation}\label{eq:DEF_closed}
			D_n=\alpha_{\mathrm H}C'_n+\beta_1 C_n,\qquad
			E_n=n(n+\Omega)+\alpha_{\mathrm H}B'_n+\beta_1 B_n+\beta_0,\qquad
			F_n=\alpha_{\mathrm H}A'_n+\beta_1 A_n,
		\end{equation}
		with $A_n,B_n,C_n,A'_n,B'_n,C'_n$ given in \eqref{eq:ABC_coeffs}.
	\end{proposition}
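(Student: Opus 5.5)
The plan is a direct linear-algebra computation. It combines the diagonal action of $\Lambda_1$ from Lemma~\ref{lem:Lambda1_eig} with the tridiagonal actions of $z$ and $\Lambda_2$ from Lemma~\ref{lem:SS_3term}, after first rewriting the multiplicative part of $\mathcal{L}$ as an affine function of $z$. Concretely, I first note the elementary identity $\mu(z-1)+\eta z=(\mu+\eta)z-\mu=\beta_1 z+\beta_0$. With it, the operator \eqref{eq:HeunC_operator_form} becomes
\[
\mathcal{L}=\Lambda_1+\alpha_{\mathrm H}\Lambda_2+\beta_1\,(z\cdot)+\beta_0 .
\]
This is a sum of four linear operators on $\mathbb{C}[z]$. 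By Remark~\ref{rem:Jacobi} each of them maps $y_n$ to a polynomial of degree at most $n+1$, so $\mathcal{L}y_n$ has a unique expansion in $\{y_0,\dots,y_{n+1}\}$.

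Next I apply $\mathcal{L}$ to $y_n$ term by term and collect terms by basis element.
\begin{enumerate}
\item Lemma~\ref{lem:Lambda1_eig} gives $\Lambda_1 y_n=n(n+\Omega)y_n$, which contributes only to the coefficient of $y_n$.
\item Equation \eqref{eq:Lambda2_three} gives $\alpha_{\mathrm H}\Lambda_2 y_n=\alpha_{\mathrm H}(A'_n y_{n+1}+B'_n y_n+C'_n y_{n-1})$.
\item Equation \eqref{eq:z_three} gives $\beta_1 z y_n=\beta_1(A_n y_{n+1}+B_n y_n+C_n y_{n-1})$.
\item The scalar $\beta_0$ contributes $\beta_0 y_n$.
\end{enumerate}
Summing and grouping by $y_{n-1}$, $y_n$ and $y_{n+1}$ yields exactly $D_n=\alpha_{\mathrm H}C'_n+\beta_1C_n$, $E_n=n(n+\Omega)+\alpha_{\mathrm H}B'_n+\beta_1B_n+\beta_0$ and $F_n=\alpha_{\mathrm H}A'_n+\beta_1A_n$. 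Because $\{y_k\}$ is a basis, these coefficients are uniquely determined, which proves \eqref{eq:DEF_relation}--\eqref{eq:DEF_closed}.

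The only point needing care is the boundary index $n=0$, where $y_{-1}$ is undefined. I will handle it by checking that $C_0=0$ and $C'_0=0$: both numerators in \eqref{eq:ABC_coeffs} carry a factor $n$, and their denominators are nonzero under Assumption~\ref{ass:generic_SS}. Hence $D_0=0$, and the $y_{-1}$ term is vacuous for any convention. Assumption~\ref{ass:generic_SS} also guarantees that all the coefficients in \eqref{eq:ABC_coeffs} are finite, so the formulas are well defined for every $n\ge0$.

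I do not expect a genuine obstacle, since all the substantive work sits in Lemma~\ref{lem:SS_3term}. The main risk is bookkeeping: matching the sign convention of $\beta_0=-\mu$, and not mixing up the labels, since $C_n,C'_n$ multiply $y_{n-1}$ while $A_n,A'_n$ multiply $y_{n+1}$.
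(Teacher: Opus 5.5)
Your proposal is correct and follows the paper's proof essentially verbatim. Both rewrite $\mu(z-1)+\eta z=\beta_1 z+\beta_0$, apply the eigenrelation for $\Lambda_1$ and the three-term expansions of $z\,y_n$ and $\Lambda_2 y_n$, and collect the coefficients of $y_{n-1},y_n,y_{n+1}$; your extra check that $C_0=C'_0=0$ is exactly what the paper records separately as Lemma~\ref{lem:SS_lowerTerm}.
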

	
	\begin{proof}
		By \eqref{eq:HeunC_operator_form} and Lemma \ref{lem:Lambda1_eig},
		\[
		\mathcal{L}y_n=n(n+\Omega)y_n+\alpha_{\mathrm H}\Lambda_2 y_n+\beta_1(z\,y_n)+\beta_0 y_n.
		\]
		Insert \eqref{eq:z_three} and \eqref{eq:Lambda2_three} and collect the coefficients of
		$y_{n-1},y_n,y_{n+1}$ to obtain \eqref{eq:DEF_closed}.
	\end{proof}
	
	\begin{lemma}[Lower-end termination]\label{lem:SS_lowerTerm}
		One has $C_0=C'_0=0$, hence $D_0=0$. Consequently $\mathcal{L}$ maps the one-sided span
		$\mathrm{span}\{y_n\}_{n\ge0}$ into itself.
	\end{lemma}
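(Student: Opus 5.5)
The plan is to read $C_0$ and $C'_0$ directly from the closed forms \eqref{eq:ABC_coeffs} and then pass to $D_0$ through \eqref{eq:DEF_closed}. Setting $n=0$ gives
\[
C_0=-\frac{0\cdot(\Delta-1)}{\Omega(\Omega-1)}=0,
\qquad
C'_0=\frac{0\cdot\Omega\,(\Delta-1)}{\Omega(\Omega-1)}=0.
\]
Assumption~\ref{ass:generic_SS} with $n=0$ gives $\Omega\neq0$ and $\Omega-1\neq0$, so both denominators are nonzero. Hence the explicit factor $n$ in the numerators really does force both expressions to vanish; this is not an indeterminate $0/0$. We may also use $C'_0=-(0+\Omega)C_0=0$ from Lemma~\ref{lem:SS_3term}. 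Then \eqref{eq:DEF_closed} gives $D_0=\alpha_{\mathrm H}C'_0+\beta_1C_0=0$.

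The one point that needs care is that the formulas of Lemma~\ref{lem:SS_3term} are stated for every $n\in\mathbb{N}_0$, and at $n=0$ they refer to a symbol $y_{-1}$ that is not a basis element. I will settle this structurally, independently of the rational formulas. Since $y_0\equiv1$, the product $z\,y_0=z$ is a polynomial of degree $1$, so it lies in $\mathrm{span}\{y_0,y_1\}$ by Remark~\ref{rem:Jacobi}. Likewise $\Lambda_2y_0=z(z-1)\,y_0'=0$. So no $y_{-1}$ component can appear, which is consistent with $C_0=C'_0=0$: the coefficient of $y_{-1}$ is identically zero and the term is vacuous.

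For the invariance statement, I will use Proposition~\ref{prop:DEF_confluent}. For $n\ge1$, $\mathcal{L}y_n\in\mathrm{span}\{y_{n-1},y_n,y_{n+1}\}$, and all three indices are nonnegative. For $n=0$, $\mathcal{L}y_0=E_0y_0+F_0y_1$ because $D_0=0$. Therefore $\mathcal{L}$ maps each $y_n$ into $\mathrm{span}\{y_k\}_{k\ge0}$. By linearity, $\mathcal{L}$ maps the one-sided span into itself, understood as finite combinations or as formal expansions acted on termwise.

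There is no real obstacle here. The only subtle step is making sure the vanishing at $n=0$ is genuine and not an artifact of a degenerate denominator, and Assumption~\ref{ass:generic_SS} handles exactly that.
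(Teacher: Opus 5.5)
Your proof is correct and follows the paper's argument. The explicit factor $n$ in $C_n$ and $C'_n$ kills both at $n=0$, so $D_0=\alpha_{\mathrm H}C'_0+\beta_1C_0=0$ and hence $\mathcal{L}y_0=E_0y_0+F_0y_1$. Your additions are sound refinements that the paper leaves implicit: the check that the denominators $\Omega(\Omega-1)$ are nonzero by Assumption~\ref{ass:generic_SS}, and the direct computation $zy_0=z$, $\Lambda_2y_0=0$.
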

	
	\begin{proof}
		By \eqref{eq:ABC_coeffs} both $C_n$ and $C'_n$ contain the factor $n$, so
		$C_0=C'_0=0$; then $D_0=\alpha_{\mathrm H}C'_0+\beta_1 C_0=0$ by \eqref{eq:DEF_closed}.
		Thus $\mathcal{L}y_0=E_0y_0+F_0y_1$ involves no $y_{-1}$, and the claim follows.
	\end{proof}
	
	\section{The three-term recurrence}\label{sec:recurrence}
	
	We seek a (Class I) solution of \eqref{eq:HeunC_Schw} as a one-sided Svartholm-Schmidt
	expansion
	\begin{equation}\label{eq:SS_series}
		H(z)=\sum_{n=0}^{\infty}c_n\,y_n(z),
		\qquad y_n \text{ as in \eqref{eq:yn_def}}.
	\end{equation}
	All the ingredients needed for the recurrence are now in place
	(Lemmas \ref{lem:Lambda1_eig}, \ref{lem:SS_3term}, \ref{lem:SS_lowerTerm} and
	Proposition \ref{prop:DEF_confluent}); the convergence of \eqref{eq:SS_series} is a
	separate, analytic question treated in Section \ref{sec:asymptotics}.
	
	\begin{theorem}[Three-term recurrence for the coefficients]\label{thm:SS_3term_recurrence}
		Let $H$ be given by \eqref{eq:SS_series} and let $D_n,E_n,F_n$ be as in
		Proposition \ref{prop:DEF_confluent}. With the conventions $c_{-1}=0$ and $F_{-1}:=0$,
		the coefficients $\{c_n\}_{n\ge0}$ satisfy
		\begin{equation}\label{eq:SS_threeTerm}
			F_{n-1}\,c_{n-1}+E_n\,c_n+D_{n+1}\,c_{n+1}=0,\qquad n\ge0.
		\end{equation}
	\end{theorem}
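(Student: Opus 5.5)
The plan is to apply $\mathcal{L}$ termwise to the expansion \eqref{eq:SS_series}, use Proposition~\ref{prop:DEF_confluent} to express each $\mathcal{L}y_n$ in the basis, re-index, and then invoke uniqueness of coefficients in the graded basis of Remark~\ref{rem:Jacobi}.

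\textbf{Step 1: termwise action.} We work at the level of formal expansions (convergence is deferred to Section~\ref{sec:asymptotics}). Interchanging $\mathcal{L}$ with the sum and using \eqref{eq:DEF_relation} gives
\[
\mathcal{L}H=\sum_{n\ge0}c_n\bigl(D_n y_{n-1}+E_n y_n+F_n y_{n+1}\bigr).
\]
By Lemma~\ref{lem:SS_lowerTerm}, $D_0=0$, so no $y_{-1}$ term appears.

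\textbf{Step 2: collect coefficients.} We shift indices in the first and third sums, $n\mapsto n+1$ and $n\mapsto n-1$ respectively. This puts the result in the form
\[
\mathcal{L}H=\sum_{k\ge0}\bigl(F_{k-1}c_{k-1}+E_k c_k+D_{k+1}c_{k+1}\bigr)y_k .
\]
The conventions $c_{-1}=0$ and $F_{-1}=0$ absorb the boundary term at $k=0$.

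\textbf{Step 3: extract the recurrence.} Since $\{y_k\}$ is a graded basis of $\mathbb{C}[z]$, the coefficient of each $y_k$ can be extracted uniquely. The cleanest way is via truncations $H_N=\sum_{n\le N}c_ny_n$: the image $\mathcal{L}H_N$ is a polynomial of degree at most $N+1$. Its coefficients on $y_k$ for $k\le N-1$ coincide with the bracket above, independently of $N$. Requiring $\mathcal{L}H=0$ coefficientwise in the basis therefore yields \eqref{eq:SS_threeTerm} for every $k\ge0$.

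\textbf{Main obstacle.} The only delicate point is justifying the termwise interchange and the coefficient identification for an infinite sum. I would address this by stating the theorem as an identity of formal Svartholm--Schmidt expansions, i.e.\ coefficientwise in the graded basis. That identity is exactly what the truncation argument proves, and it leaves analytic convergence to the later section.
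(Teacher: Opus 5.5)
Your proposal is correct and follows essentially the same route as the paper: apply $\mathcal{L}$ termwise using Proposition~\ref{prop:DEF_confluent}, use $D_0=0$ from Lemma~\ref{lem:SS_lowerTerm} to remove the stray $y_{-1}$ term, re-index the band-limited sum, and read off each $y_k$-coefficient in the graded basis, all at the formal level with convergence deferred. Your truncation argument via $H_N$ is just a more explicit way of stating the paper's observation that each $y_k$-coefficient is a finite sum, independent of convergence: for every $N\ge k+1$ the $y_k$-coefficient of $\mathcal{L}H_N$ equals $F_{k-1}c_{k-1}+E_kc_k+D_{k+1}c_{k+1}$.
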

	
	\begin{proof}
		By Proposition \ref{prop:DEF_confluent},
		$\mathcal{L}y_n=D_n y_{n-1}+E_n y_n+F_n y_{n+1}$, so applying $\mathcal{L}$ to
		\eqref{eq:SS_series} gives
		\begin{equation}\label{eq:LH_expand}
			\mathcal{L}H=\sum_{n\ge0}c_n\bigl(D_n y_{n-1}+E_n y_n+F_n y_{n+1}\bigr).
		\end{equation}
		The action is \emph{band-limited}: each fixed basis element $y_k$ occurs in
		$\mathcal{L}y_n$ only for $n\in\{k-1,k,k+1\}$. Hence the coefficient of $y_k$ in
		\eqref{eq:LH_expand} is the \emph{finite} sum $F_{k-1}c_{k-1}+E_k c_k+D_{k+1}c_{k+1}$,
		independent of any convergence considerations; this is the formal-power-series content of
		the identity $\mathcal{L}H=0$. (Convergence of the resulting series, and thus its status
		as a genuine analytic solution, is established in Proposition \ref{prop:SS_convergence}.)
		
		Two boundary effects must be checked. First, the $n=0$ summand contributes
		$c_0 D_0\,y_{-1}$, a term \emph{outside} the one-sided basis $\{y_k\}_{k\ge0}$; by
		Lemma \ref{lem:SS_lowerTerm}, $D_0=0$, so this stray term vanishes and the whole
		expansion remains in $\mathrm{span}\{y_k\}_{k\ge0}$. Second, the coefficient of $y_0$
		receives no contribution of the form $F_{-1}c_{-1}$ (there is no $n=-1$ summand), which
		is encoded by the conventions $c_{-1}=0$, $F_{-1}=0$.
		
		Since $\{y_k\}_{k\ge0}$ is a graded basis (Remark \ref{rem:Jacobi}), the vanishing of
		$\mathcal{L}H$ forces the coefficient of each $y_k$ to vanish, which is exactly
		\eqref{eq:SS_threeTerm}.
	\end{proof}
	
	\subsection{First step and polynomial truncation}\label{subsec:SS_first_step}
	
	\begin{proposition}[First step]\label{prop:SS_c1}
		Assume $c_0\neq0$ and $D_1\neq0$. Then \eqref{eq:SS_threeTerm} at $n=0$ reads
		$E_0c_0+D_1c_1=0$, i.e.\ $c_1=-\,(E_0/D_1)\,c_0$.
	\end{proposition}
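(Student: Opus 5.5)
The claim is a direct specialization of Theorem~\ref{thm:SS_3term_recurrence} to the index $n=0$. I will set $n=0$ in \eqref{eq:SS_threeTerm}, which gives
\[
F_{-1}c_{-1}+E_0c_0+D_1c_1=0 .
\]
The conventions $c_{-1}=0$ and $F_{-1}=0$ fixed in that theorem remove the first term. This leaves exactly $E_0c_0+D_1c_1=0$. Since $D_1\neq0$ by hypothesis, dividing by $D_1$ gives $c_1=-(E_0/D_1)\,c_0$.

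The only point that deserves checking is that the lower boundary is handled consistently: no contribution from a fictitious $y_{-1}$ should leak into the $y_0$ coefficient. This is exactly the content of Lemma~\ref{lem:SS_lowerTerm}, namely $D_0=0$, which was already used in the proof of Theorem~\ref{thm:SS_3term_recurrence}. So I will simply cite it.

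For completeness I will write $D_1$ explicitly from \eqref{eq:DEF_closed} and \eqref{eq:ABC_coeffs}:
\[
D_1=\alpha_{\mathrm H}C'_1+\beta_1C_1=C_1\bigl(\beta_1-\alpha_{\mathrm H}(1+\Omega)\bigr),
\qquad
C_1=-\frac{\Delta}{(\Omega+2)(\Omega+1)} .
\]
This shows that the hypothesis $D_1\neq0$ is a genuine nondegeneracy condition. It amounts to $\Delta\neq0$ and $\beta_1\neq\alpha_{\mathrm H}(1+\Omega)$, and Assumption~\ref{ass:generic_SS} guarantees the denominators are finite.

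Likewise $E_0=\beta_1B_0+\beta_0$, with $B_0=\Gamma(\Omega-1)/\bigl((\Omega+1)(\Omega-1)\bigr)=\Gamma/(\Omega+1)$, because $B'_0=0$. This gives the explicit ratio
\[
c_1/c_0=-E_0/D_1 .
\]

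The hypothesis $c_0\neq0$ is not needed for the identity itself; it only ensures that the series is nontrivial and that $c_1$ is normalized relative to $c_0$. There is no real obstacle here. The only care required is bookkeeping of the boundary conventions and of the genericity assumptions that keep $D_1$ well defined.
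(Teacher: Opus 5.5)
Your proof is correct and follows the same argument as the paper. You set $n=0$ in \eqref{eq:SS_threeTerm}, drop $F_{-1}c_{-1}$ by the stated conventions (with $D_0=0$ from Lemma~\ref{lem:SS_lowerTerm} guaranteeing no $y_{-1}$ contribution), and divide by $D_1\neq0$; your explicit forms $D_1=C_1\bigl(\beta_1-\alpha_{\mathrm H}(1+\Omega)\bigr)$ with $C_1=-\Delta/\bigl((\Omega+1)(\Omega+2)\bigr)$ and $E_0=\beta_1\Gamma/(\Omega+1)+\beta_0$ are also correct, though not needed for the statement.
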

	
	\begin{proof}
		Immediate from \eqref{eq:SS_threeTerm} with $n=0$, using $F_{-1}=0$ and $D_0=0$
		(Lemma \ref{lem:SS_lowerTerm}).
	\end{proof}
	
	A finite (polynomial) expansion occurs when the upward recurrence terminates: if there is
	$N\in\mathbb N_0$ with
	\begin{equation}\label{eq:SS_termination_condition}
		D_{N+1}=0\quad\text{and}\quad c_N\neq0,
	\end{equation}
	then $c_{N+1}=c_{N+2}=\cdots=0$ and \eqref{eq:SS_series} reduces to a finite sum. In the
	confluent-Heun setting such termination typically requires an additional accessory-parameter
	condition ensuring that the recurrence indeed yields $c_{N+1}=0$ at the termination index.
	
	\section{Large-order asymptotics, minimal tails, and convergence}\label{sec:asymptotics}
	
	\subsection{Asymptotics of the recurrence coefficients}\label{subsec:coeff_asymptotics}
	
	From \eqref{eq:ABC_coeffs}, as $n\to\infty$,
	\begin{equation}\label{eq:SS_ABC_asympt}
		A_n=-\tfrac14+O(n^{-1}),\quad
		B_n=\tfrac12+O(n^{-1}),\quad
		C_n=-\tfrac14+O(n^{-1}),
	\end{equation}
	\begin{equation}\label{eq:SS_ABCp_asympt}
		A'_n=-\tfrac{n}{4}+O(1),\quad
		B'_n=O(1),\quad
		C'_n=\tfrac{n}{4}+O(1).
	\end{equation}
	Hence, by \eqref{eq:DEF_closed},
	\begin{equation}\label{eq:SS_DEF_asympt}
		D_n=\frac{\alpha_{\mathrm H}}{4}\,n+O(1),\qquad
		E_n=n^2+O(n),\qquad
		F_n=-\frac{\alpha_{\mathrm H}}{4}\,n+O(1),\qquad n\to\infty,
	\end{equation}
	so the recurrence \eqref{eq:SS_threeTerm} is of Perron type. Substituting
	\eqref{eq:SS_DEF_asympt} into \eqref{eq:SS_threeTerm} gives the leading-order balance
	\begin{equation}\label{eq:SS_asympt_recurrence}
		-\frac{\alpha_{\mathrm H}}{4}\,n\,c_{n-1}+\bigl(n^2+O(n)\bigr)c_n
		+\frac{\alpha_{\mathrm H}}{4}\,(n+1)\,c_{n+1}=0,
	\end{equation}
	whose two branches have $c_{n+1}/c_n\sim-4n/\alpha_{\mathrm H}$ (dominant) and
	\begin{equation}\label{eq:SS_ratio_asympt}
		\frac{c_{n+1}}{c_n}=\frac{\alpha_{\mathrm H}}{4n}\Bigl(1+O(n^{-1})\Bigr)\quad\text{(minimal).}
	\end{equation}

This leading-order structure is specific to the confluent case. In contrast with the
general Heun setting, where the recurrence is governed asymptotically by a
constant-coefficient quadratic characteristic equation, here the coefficients satisfy
\[
D_n,F_n=O(n),\qquad E_n=O(n^2),
\]
as shown in \eqref{eq:SS_DEF_asympt}. The resulting minimal branch therefore has
factorial-type decay, reflected in
\[
\frac{c_{n+1}}{c_n}
\sim
\frac{\alpha_{\mathrm H}}{4n},
\]
rather than geometric decay.

	\subsection{Minimal tails, lower-end compatibility, and convergence}\label{subsec:minimal_solution}

	\begin{theorem}[Minimal tail solution]\label{thm:SS_minimal_vs_dominant}
		Assume $\alpha_{\mathrm H}\neq0$ and the genericity conditions of
		Assumption \ref{ass:generic_SS}. Then the three-term recurrence \eqref{eq:SS_threeTerm}
		has, at infinity, a unique minimal tail solution up to normalization, for which the ratio
		obeys \eqref{eq:SS_ratio_asympt}, $c_{n+1}/c_n=\tfrac{\alpha_{\mathrm H}}{4n}(1+O(n^{-1}))$.
		A linearly independent tail is dominant. This is a statement about the behavior at
		$n\to\infty$ only: \emph{by itself it does not imply compatibility with the lower-end
		equation at $n=0$}.
	\end{theorem}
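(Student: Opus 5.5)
The plan is to treat \eqref{eq:SS_threeTerm} as a second-order linear difference equation for large $n$, and to apply the Poincaré--Perron theory \cite{Perron1929}, in the form refined by Kreuser and Adams, after a rescaling that turns its characteristic equation into one with two roots of distinct modulus. By \eqref{eq:SS_DEF_asympt} and Assumption \ref{ass:generic_SS}, together with $\alpha_{\mathrm H}\neq0$, we have $D_{n+1}\neq0$ for all large $n$. Dividing \eqref{eq:SS_threeTerm} by $D_{n+1}$ gives, for $n$ beyond some $n_0$,
\[
c_{n+1}+\frac{E_n}{D_{n+1}}c_n+\frac{F_{n-1}}{D_{n+1}}c_{n-1}=0,
\qquad
\frac{E_n}{D_{n+1}}=\frac{4n}{\alpha_{\mathrm H}}\bigl(1+O(n^{-1})\bigr),
\qquad
\frac{F_{n-1}}{D_{n+1}}=-1+O(n^{-1}).
\]
The coefficients grow like $n^{1}$ and $n^{0}$, which is exactly the Perron setting with unequal growth orders. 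The associated Newton polygon has slopes $1$ and $-1$. These predict one solution with $c_{n+1}/c_n\sim-4n/\alpha_{\mathrm H}$ and one with $c_{n+1}/c_n\sim\alpha_{\mathrm H}/(4n)$.

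To make this rigorous and self-contained, I will change variables to balance the slopes. Set $c_n=u_n/\Gamma(n+1)$ for the minimal branch, or equivalently work with ratios directly. Put $r_n:=c_{n+1}/c_n$. The recurrence becomes the Riccati-type map
\[
r_n=-\frac{F_{n-1}/D_{n+1}}{E_n/D_{n+1}+r_{n-1}}.
\]
Read backwards, this map is a contraction near the small root. Concretely, I will define the candidate minimal solution through the continued fraction
\[
r_{n-1}=-\frac{F_{n-1}}{E_n+D_{n+1}r_n},
\]
iterated downward from $N\to\infty$. I will then show that the tails converge. Pincherle's theorem (equivalently, Gautschi's analysis \cite{Gautschi1967}) says that convergence of this continued fraction is equivalent to the existence of a minimal solution. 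This gives existence, and the ratio asymptotics $r_n=\frac{\alpha_{\mathrm H}}{4n}(1+O(n^{-1}))$ follow by inserting the ansatz into the fixed-point equation and bootstrapping the error.

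For convergence, I will use the asymptotics. Write $E_n\sim n^2$, $D_{n+1}r_n=O(1)$ on the invariant set $\{|r_n|\le K/n\}$, and $F_{n-1}\sim-\alpha_{\mathrm H}n/4$. The backward map then sends $\{|r|\le K/(n+1)\}$ into $\{|r|\le K/n\}$ for $n$ large. Its derivative is bounded by $|F_{n-1}D_{n+1}|/|E_n|^2=O(n^{-2})$, so it is a contraction with summable Lipschitz constants, and the tail continued fraction converges.

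Next comes the dominant branch. Take any solution with $(c_{n_0},c_{n_0+1})$ not proportional to the minimal one. The forward ratio map then has the attracting fixed behaviour $r_n\sim-E_n/D_{n+1}\sim-4n/\alpha_{\mathrm H}$. This gives $c_n^{\min}/c_n\to0$, which establishes dominance and shows that the minimal tail is unique up to normalization, since the solution space is two-dimensional once $D_{n+1}\neq0$.

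I expect the main obstacle to be the uniform control of the $O(n^{-1})$ corrections in \eqref{eq:SS_DEF_asympt}. Two points need care. First, the invariant-set radius $K$ and the threshold $n_0$ must be chosen compatibly. Second, no zero of $E_n+D_{n+1}r_n$ may occur for $n\ge n_0$; I will handle this by taking $n_0$ large enough that $|E_n|\ge n^2/2$.

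Finally, I will remark that nothing in this argument involves $n=0$. The minimal tail $r_{n_0},r_{n_0+1},\dots$ is fixed by the behaviour at infinity, and propagating it down to $n=0$ need not satisfy $E_0c_0+D_1c_1=0$ from Proposition \ref{prop:SS_c1}. That is precisely the caveat in the statement, and it is deferred to the compatibility condition.
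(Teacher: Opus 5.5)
Your proposal is correct, and it reaches the statement by a more self-contained route than the paper. The paper records the asymptotics \eqref{eq:SS_DEF_asympt} and finds the two branches $c_{n+1}/c_n\sim-4n/\alpha_{\mathrm H}$ and $c_{n+1}/c_n\sim\alpha_{\mathrm H}/(4n)$ by a ratio ansatz. It then delegates existence, the ratio asymptotics, dominance, and uniqueness of the minimal tail entirely to Perron's theorem. As you observe, this is the Perron--Kreuser case of unequal growth orders: after dividing by $D_{n+1}$, one has $E_n/D_{n+1}\sim4n/\alpha_{\mathrm H}$ and $F_{n-1}/D_{n+1}\to-1$. You instead construct the minimal tail yourself. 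The backward maps $T_n(r)=-F_{n-1}/(E_n+D_{n+1}r)$ send $|r|\le K/n$ into $|r|\le K/(n-1)$ with Lipschitz constants $O(n^{-2})$, so the approximants converge. The limit obeys $r_{n-1}=-\tfrac{F_{n-1}}{E_n}\bigl(1+O(n^{-2})\bigr)=\tfrac{\alpha_{\mathrm H}}{4n}\bigl(1+O(n^{-1})\bigr)$, which is \eqref{eq:SS_ratio_asympt}. The citation buys brevity. Your route proves exactly the case needed and, as a by-product, gives convergence of \eqref{eq:SS_tail_ratio_CF} to the minimal ratios, with factorially small sensitivity to the seed. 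The paper uses this tacitly in Proposition~\ref{prop:SS_CF_compatibility} and in the seeded backward recursion of Section~\ref{subsec:SS_CF}.

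There are two repairs to make.

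\emph{Indexing.} Your first Riccati display has $r_n$ and $r_{n-1}$ interchanged. From \eqref{eq:SS_threeTerm} the forward map is $r_n=-\bigl(E_n+F_{n-1}/r_{n-1}\bigr)/D_{n+1}$. The backward form you use afterwards is correct.

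\emph{Dominance step.} Saying the large branch has ``attracting fixed behaviour'' does not yet show that every non-minimal solution reaches it. You can close this with what you already have:
\begin{enumerate}
\item By the backward contraction, the minimal orbit is the only orbit of ratios that stays in $|r_n|\le K/n$ for all large $n$.
\item Take $K\ge|\alpha_{\mathrm H}|$. If $|r_{n-1}|>K/(n-1)$, then $|F_{n-1}/r_{n-1}|\le n^2/3$. Hence $|E_n+F_{n-1}/r_{n-1}|\ge n^2/2$ and $|r_n|\ge n/|\alpha_{\mathrm H}|$ for large $n$.
\item From then on $F_n/r_n=O(1)$, which keeps $r_{n+1}=-\tfrac{E_{n+1}}{D_{n+2}}\bigl(1+O(n^{-2})\bigr)$ at every later step. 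Treating $r=0,\infty$ projectively covers zeros of $c_n$.
\end{enumerate}
Alternatively, use the Casoratian $W_n=f_ng_{n+1}-f_{n+1}g_n$. It satisfies $W_n=(F_{n-1}/D_{n+1})W_{n-1}$, so it is polynomially bounded above and below. Then $g_{n+1}/f_{n+1}-g_n/f_n=W_n/(f_nf_{n+1})$ forces $f_n/g_n\to0$. Either way you get dominance and uniqueness of the minimal tail up to normalization.
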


	\begin{proof}
		The asymptotics \eqref{eq:SS_DEF_asympt}, $D_n=\tfrac{\alpha_{\mathrm H}}4n+O(1)$,
		$E_n=n^2+O(n)$, $F_n=-\tfrac{\alpha_{\mathrm H}}4n+O(1)$, make \eqref{eq:SS_threeTerm} of
		Perron type. A ratio ansatz yields two asymptotic branches; one satisfies
		$c_{n+1}/c_n\sim\alpha_{\mathrm H}/(4n)$ and is therefore minimal, the other is dominant.
		Uniqueness of the minimal tail up to scalar multiplication is the standard Perron theorem
		for second-order linear recurrences with separated asymptotic branches.
	\end{proof}

	The minimal tail fixes the sequence only from above; matching it to the lower-end equation
	is a \emph{separate} condition, which we now make explicit.

	\begin{proposition}[Continued-fraction compatibility]\label{prop:SS_CF_compatibility}
		Write $R_n:=c_n/c_{n-1}$ ($n\ge1$). For $n\ge1$ the recurrence \eqref{eq:SS_threeTerm}
		gives the backward recursion
		\begin{equation}\label{eq:SS_backward_CF}
			R_n=-\frac{F_{n-1}}{E_n+D_{n+1}R_{n+1}},
		\end{equation}
		so that, for the minimal tail of Theorem \ref{thm:SS_minimal_vs_dominant},
		\begin{equation}\label{eq:SS_tail_ratio_CF}
			R_1^{\mathrm{tail}}
			=-\cfrac{F_0}{E_1-\cfrac{D_2F_1}{E_2-\cfrac{D_3F_2}{E_3-\ddots}}}.
		\end{equation}
		The lower-end equation $E_0c_0+D_1c_1=0$ separately imposes $R_1^{\mathrm{low}}=-E_0/D_1$.
		Hence the one-sided Svartholm-Schmidt expansion is compatible with the minimal tail if and
		only if $E_0+D_1R_1^{\mathrm{tail}}=0$, i.e.
		\begin{equation}\label{eq:SS_continued_fraction}
			E_0-\cfrac{D_1F_0}{E_1-\cfrac{D_2F_1}{E_2-\cfrac{D_3F_2}{E_3-\ddots}}}=0 .
		\end{equation}
		This is the correct continued-fraction selection condition.
	\end{proposition}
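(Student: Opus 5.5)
The plan is to derive everything directly from the three-term recurrence \eqref{eq:SS_threeTerm} of Theorem \ref{thm:SS_3term_recurrence}, treating the minimal tail of Theorem \ref{thm:SS_minimal_vs_dominant} and the lower-end equation as two separate constraints on the same ratio $R_1$.

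\textbf{Backward recursion.} For $n\ge1$, take \eqref{eq:SS_threeTerm} in the form $F_{n-1}c_{n-1}+E_nc_n+D_{n+1}c_{n+1}=0$. On the minimal tail, for $n$ large, $c_n\neq0$, since the ratio behaves like \eqref{eq:SS_ratio_asympt}. Dividing by $c_{n-1}$ and writing $c_{n+1}/c_{n-1}=R_{n+1}R_n$ gives
\[
F_{n-1}+R_n\bigl(E_n+D_{n+1}R_{n+1}\bigr)=0,
\]
which rearranges to \eqref{eq:SS_backward_CF} whenever the denominator is nonzero. Under the genericity exclusions in Assumption \ref{ass:generic_SS}, I treat the vanishing of $E_n+D_{n+1}R_{n+1}$ as one of the excluded isolated parameter values. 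Iterating \eqref{eq:SS_backward_CF} from $n=1$ upward produces the formal expansion \eqref{eq:SS_tail_ratio_CF}, after substituting $D_{n+1}R_{n+1}=-D_{n+1}F_n/(E_{n+1}+\cdots)$ at each level.

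\textbf{Identifying the limit (main obstacle).} The key step is to show that the infinite continued fraction converges and that its value is exactly the minimal-tail ratio, rather than a formal object. I would invoke Pincherle's theorem (see Gautschi \cite{Gautschi1967}). For the recurrence $c_{n+1}=-(E_n/D_{n+1})c_n-(F_{n-1}/D_{n+1})c_{n-1}$, the continued fraction built from the partial numerators $-F_{n-1}/D_{n+1}$ and denominators $-E_n/D_{n+1}$ converges if and only if a minimal solution exists. When it converges, its tails equal $R_n$ of the minimal solution. Existence of the minimal solution is supplied by Theorem \ref{thm:SS_minimal_vs_dominant}, which rests on the separated Perron branches $\alpha_{\mathrm H}/(4n)$ versus $-4n/\alpha_{\mathrm H}$ with $\alpha_{\mathrm H}\neq0$. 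An equivalent form of the fraction is obtained by multiplying numerator and denominator at each level by $D_{n+1}$. This yields \eqref{eq:SS_tail_ratio_CF}.

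\textbf{Matching with the lower end.} By Proposition \ref{prop:SS_c1}, with $c_0\neq0$ and $D_1\neq0$, the $n=0$ equation forces $R_1^{\mathrm{low}}=-E_0/D_1$. A one-sided solution that is minimal at infinity exists exactly when this forced value coincides with the tail value.

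For necessity, suppose such a solution exists. By uniqueness of the minimal tail up to normalization, $R_1=R_1^{\mathrm{tail}}$, and the $n=0$ equation then gives $E_0+D_1R_1^{\mathrm{tail}}=0$.

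For sufficiency, suppose $E_0+D_1R_1^{\mathrm{tail}}=0$. Define $c_n=c_0\prod_{k=1}^nR_k^{\mathrm{tail}}$. This sequence satisfies all equations with $n\ge1$ by construction, and it satisfies the $n=0$ equation by hypothesis. Substituting the continued fraction for $R_1^{\mathrm{tail}}$ into $E_0+D_1R_1^{\mathrm{tail}}=0$ gives \eqref{eq:SS_continued_fraction}.
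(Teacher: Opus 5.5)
Your proposal is correct and follows essentially the same route as the paper: derive the backward ratio recursion from the three-term recurrence for $n\ge1$, iterate it to express $R_1^{\mathrm{tail}}$ as the continued fraction, and then impose the $n=0$ equation $E_0+D_1R_1=0$. The paper only asserts that iterating backward from the minimal tail yields the continued fraction, so your appeal to Pincherle's theorem and your explicit necessity/sufficiency split make rigorous what the paper leaves implicit; Pincherle's theorem needs $F_{n-1}D_{n+1}\neq0$ and a minimal solution with $c_0\neq0$, both of which fall under the generic exclusions.
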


	\begin{proof}
		For $n\ge1$ divide $F_{n-1}c_{n-1}+E_nc_n+D_{n+1}c_{n+1}=0$ by $c_n$ to get
		$F_{n-1}/R_n+E_n+D_{n+1}R_{n+1}=0$, which rearranges to \eqref{eq:SS_backward_CF};
		iterating backward from the minimal tail gives \eqref{eq:SS_tail_ratio_CF}. Compatibility
		with the lower endpoint is exactly $E_0+D_1R_1^{\mathrm{tail}}=0$, which is
		\eqref{eq:SS_continued_fraction}.
	\end{proof}

	Since $y_n\propto P_n^{(\Omega-\Gamma,\Gamma-1)}(2z-1)$ (Remark \ref{rem:Jacobi}), classical
	Jacobi asymptotics give, for each fixed $z\in[0,1)$, constants $C(z),\sigma(z)>0$ with
	\begin{equation}\label{eq:SS_basis_growth_bound}
		|y_n(z)|\le C(z)\,e^{\sigma(z)n},\qquad n\ge0,
	\end{equation}
	and $\sigma(z)$ uniformly bounded on compact subsets of $[0,1)$.

	\begin{proposition}[Absolute convergence of the compatible minimal branch]\label{prop:SS_convergence}
		Assume the compatibility condition \eqref{eq:SS_continued_fraction} holds, and let
		$\{c_n\}_{n\ge0}$ be the corresponding one-sided sequence normalized by $c_0\neq0$. Then
		\eqref{eq:SS_series} converges absolutely for every $z\in[0,1)$ and uniformly on compact
		subsets of $[0,1)$.
	\end{proposition}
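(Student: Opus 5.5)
The argument is a Weierstrass $M$-test. It pits the factorial decay of the minimal coefficients against the at-most-exponential growth of the basis functions.

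\textbf{Step 1: the sequence is the minimal solution.} Under \eqref{eq:SS_continued_fraction}, the one-sided sequence generated from $c_0\neq0$ by the lower-end equation has $R_1=c_1/c_0=-E_0/D_1$. This equals $R_1^{\mathrm{tail}}$, which is the value obtained by backward iteration of \eqref{eq:SS_backward_CF} from the minimal tail of Theorem \ref{thm:SS_minimal_vs_dominant}.

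A second-order recurrence with $D_{n+1}\neq0$ is determined by $(c_0,c_1)$. Here $D_{n+1}\neq 0$ holds for all large $n$ by \eqref{eq:SS_DEF_asympt}, and for the remaining finitely many $n$ by the genericity exclusions in Assumption \ref{ass:generic_SS}. Hence the forward-generated sequence coincides with the minimal solution normalized by $c_0$. In particular $c_n\neq0$ for large $n$, and
\[
\frac{c_{n+1}}{c_n}=\frac{\alpha_{\mathrm H}}{4n}\bigl(1+O(n^{-1})\bigr).
\]
One subtlety is that the continued fraction must converge to the minimal ratio. This is Pincherle's theorem, which is the standard companion of Perron's theorem that I will invoke.

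\textbf{Step 2: factorial bound on the coefficients.} From the ratio asymptotics, fix $N_0$ such that $|c_{n+1}/c_n|\le |\alpha_{\mathrm H}|/(2n)$ for $n\ge N_0$. Iterating gives
\[
|c_n|\le K\,\frac{(|\alpha_{\mathrm H}|/2)^{n}}{(n-1)!}
\]
for some constant $K$ and all $n\ge1$.

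\textbf{Step 3: comparison with the basis bound.} Let $\mathcal K\subset[0,1)$ be compact, and set $\sigma^*:=\sup_{\mathcal K}\sigma(z)$ and $C^*:=\sup_{\mathcal K}C(z)$. Both are finite by the uniformity stated after \eqref{eq:SS_basis_growth_bound}. If needed, I will justify the uniform constant directly. The Jacobi asymptotics, or simply the crude bound $|{}_2F_1(-n,n+\Omega;\Gamma;z)|\le\sum_k |(-n)_k(n+\Omega)_k/((\Gamma)_k k!)|$ for $|z|\le1$, give a bound of the form $C^*e^{\sigma^* n}$ with constants independent of $z\in[0,1]$. Then
\[
\sum_n |c_n y_n(z)|\le \sum_n K C^*\frac{(|\alpha_{\mathrm H}|e^{\sigma^*}/2)^n}{(n-1)!}<\infty,
\]
uniformly on $\mathcal K$. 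The Weierstrass $M$-test then gives absolute and uniform convergence, and pointwise convergence on $[0,1)$ follows.

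\textbf{Main obstacle.} The delicate step is Step 1: rigorously identifying the compatible sequence with the minimal solution, including the finitely many low indices where $D_n$ could vanish. I would treat these indices via the genericity assumption.

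A second concern is the uniform-in-$z$ exponential bound on $y_n$. The crude coefficient bound actually gives growth like $4^n$ times a polynomial factor, which is still only exponential. Any exponential rate suffices against the factorial decay from Step 2, so this gap is easy to close.
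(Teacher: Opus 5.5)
Your proposal is correct and follows the same route as the paper, which plays the factorial decay implied by $c_{n+1}/c_n\sim\alpha_{\mathrm H}/(4n)$ against the exponential bound \eqref{eq:SS_basis_growth_bound}; your Step 1 (the Pincherle identification of the forward-generated sequence with the minimal solution) and the explicit $M$-test majorant fill in details that the paper's two-line proof leaves implicit. One harmless slip: your crude sum $\sum_k|(-n)_k(n+\Omega)_k/((\Gamma)_k\,k!)|$ grows like $(3+2\sqrt2)^n$ rather than $4^n$ (for $\Gamma=\Omega=1$ it equals the Legendre value $P_n(3)$), but, as you note, any exponential rate is beaten by the factorial decay.
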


	\begin{proof}
		By \eqref{eq:SS_ratio_asympt} the compatible (minimal) branch has $|c_{n+1}/c_n|=O(1/n)$,
		so $|c_n|$ decays like $1/n!$ up to subexponential factors; together with the
		at-most-exponential bound \eqref{eq:SS_basis_growth_bound} the ratio test gives absolute
		convergence, uniform on compacta where $\sigma(z)$ is controlled.
	\end{proof}

	\section{Continued fractions, Schwarzschild specialization, and matching}\label{sec:matching}
	
	\subsection{Corrected continued-fraction selection}\label{subsec:SS_CF}

	In practice the minimal tail is computed by the backward recursion
	\eqref{eq:SS_backward_CF}, $R_n=-F_{n-1}/(E_n+D_{n+1}R_{n+1})$, seeded at a large index
	$N$ by the asymptotic value $R_N\sim\alpha_{\mathrm H}/(4N)$ from
	\eqref{eq:SS_ratio_asympt} and iterated down to $R_1=R_1^{\mathrm{tail}}$. The lower
	endpoint does \emph{not} follow automatically from this tail computation: by
	Proposition \ref{prop:SS_CF_compatibility} the admissible one-sided minimal branch is
	selected precisely by the compatibility equation \eqref{eq:SS_continued_fraction},
	$E_0+D_1R_1^{\mathrm{tail}}=0$. This is the corrected Svartholm-Schmidt continued
	fraction; the earlier practice of imposing the lower-end ratio $R_1=-E_0/D_1$ and then
	asserting minimality is consistent only when \eqref{eq:SS_continued_fraction} holds.

	\subsection{Schwarzschild specialization}\label{subsec:SS_Schw}
	
	All quantities in \eqref{eq:SS_threeTerm} become explicit once the five parameters of
	\eqref{eq:HeunC_Schw} are fixed. We stress (cf.\ Remark \ref{rem:standard_vs_physical})
	that this pertains to the \emph{abstract} standard operator; for the physical Schwarzschild
	equation only $\Gamma=1+2\rho$ is the literal image of the reduction; the
	horizon-normalized physical branch is constructed by the Frobenius method of
	Section \ref{sec14}.
	
	\begin{proposition}[Explicit recurrence data]\label{prop:SS_Schw_data}
		Let $(\alpha_{\mathrm H},\Gamma,\Delta,\mu,\eta)$ be the parameters of
		\eqref{eq:HeunC_Schw} and $\Omega=\Gamma+\Delta-1$. Then the coefficients in
		\eqref{eq:SS_threeTerm} are explicit rational functions of $n$ through \eqref{eq:ABC_coeffs}
		and \eqref{eq:DEF_closed}. In particular, with $\beta_1=\mu+\eta$, $\beta_0=-\mu$,
		\begin{align*}
		D_n&=\alpha_{\mathrm H}C'_n+(\mu+\eta)C_n,\\
		E_n&=n(n+\Omega)+\alpha_{\mathrm H}B'_n+(\mu+\eta)B_n-\mu,\\
		F_n&=\alpha_{\mathrm H}A'_n+(\mu+\eta)A_n.
		\end{align*}
	\end{proposition}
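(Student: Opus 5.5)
The statement is essentially a specialization of Proposition~\ref{prop:DEF_confluent}, so the plan is to retrace its derivation with the parameters substituted. First I rewrite the multiplication term of the operator $\mathcal{L}$ in \eqref{eq:HeunC_operator_form} as $\mu(z-1)+\eta z=(\mu+\eta)z-\mu$. This identifies $\beta_1=\mu+\eta$ and $\beta_0=-\mu$, exactly as in Proposition~\ref{prop:DEF_confluent}. Since $\Omega=\Gamma+\Delta-1$ is determined by the parameters, Lemma~\ref{lem:Lambda1_eig} gives $\Lambda_1y_n=n(n+\Omega)y_n$.

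Next I apply Lemma~\ref{lem:SS_3term} to expand $z\,y_n$ and $\Lambda_2y_n$ in $\{y_{n-1},y_n,y_{n+1}\}$. Collecting coefficients in $\mathcal{L}y_n=n(n+\Omega)y_n+\alpha_{\mathrm H}\Lambda_2y_n+\beta_1 zy_n+\beta_0y_n$ reproduces \eqref{eq:DEF_closed}. Substituting $\beta_1,\beta_0$ then yields the three displayed formulas for $D_n$, $E_n$, $F_n$. Theorem~\ref{thm:SS_3term_recurrence} shows that these are precisely the coefficients entering \eqref{eq:SS_threeTerm}.

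It remains to see that the coefficients are rational in $n$. Each of $A_n,\dots,C'_n$ in \eqref{eq:ABC_coeffs} is a quotient of polynomials in $n$, with coefficients polynomial in $(\Gamma,\Delta)$. Under Assumption~\ref{ass:generic_SS}, the denominators $(2n+\Omega)$ and $(2n+\Omega\pm1)$ do not vanish for the indices used. Hence $D_n,E_n,F_n$ are linear combinations of these quotients, with coefficients $\alpha_{\mathrm H}$, $\mu+\eta$ and constants, plus the polynomial $n(n+\Omega)$. They are therefore explicit rational functions of $n$.

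The only delicate point is the lower end, $n=0$ and $n=1$, where $2n+\Omega-1$ could vanish in a degenerate case. I would note that $C_0=C'_0=0$ by Lemma~\ref{lem:SS_lowerTerm}, so $D_0=0$ whatever the denominators are. The remaining index values are covered by the genericity assumption. This makes the statement a direct bookkeeping consequence of the earlier results, with no genuine obstacle.
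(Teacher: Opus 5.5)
Your proposal is correct and follows the paper's argument: the paper rewrites $\mu(z-1)+\eta z=(\mu+\eta)z-\mu$ and reads off $D_n,E_n,F_n$ from Proposition~\ref{prop:DEF_confluent}, so retracing that proposition and adding the rationality bookkeeping are harmless elaborations. One small quibble concerns your last paragraph: Assumption~\ref{ass:generic_SS} already covers $n=0$, and if $2n+\Omega-1$ did vanish there, the displayed $B_0$ and $B'_0$ (not just $C_0,C'_0$) would become $0/0$, so the factor $n$ in $C_n,C'_n$ does not by itself make the lower end safe.
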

	
	\begin{proof}
		Immediate from Proposition \ref{prop:DEF_confluent} after writing
		$\mu(z-1)+\eta z=(\mu+\eta)z-\mu$.
	\end{proof}
	
	\subsection{Matching to the Schwarzschild radial asymptotics}\label{subsec:SS_match}
	
	Recall from Part I that
	\begin{equation}\label{eq:Schw_R_from_H}
		R(r)=e^{\kappa r}(r-1)^{\rho}\,H\!\left(\frac{r-1}{r}\right),
	\end{equation}
	with $\rho=\pm i\omega$ encoding ingoing/outgoing behavior at the horizon and $\kappa$ as in
	\eqref{eq:kappa_def}.
	
	\paragraph{Near-horizon behavior ($z\to0$).}
	Since $y_n(0)=1$ for all $n$, \eqref{eq:SS_series} gives $H(0)=\sum_{n\ge0}c_n$, convergent by
	Proposition \ref{prop:SS_convergence}; hence, as $r\downarrow1$,
	\begin{equation}\label{eq:SS_horizon_asympt}
		R(r)\sim e^{\kappa}(r-1)^{\rho}\,H(0),
	\end{equation}
	realizing the horizon power law $(r-1)^{\rho}$ (ingoing for $\rho=-i\omega$).

For $\rho=\pm i\omega$, the reduced amplitude $H$ is analytic at the regular endpoint
$z=0$, while the full radial profile contains the prescribed horizon factor
\[
(r-1)^\rho .
\]
Thus the horizon behavior of $R$ should be interpreted in the corresponding ingoing or
outgoing horizon coordinate system. It is not, in general, ordinary analyticity of
$R(r)$ as a function of the Schwarzschild coordinate $r$.

	\paragraph{Behavior at infinity ($z\to1^-$).}
	Here $r\to\infty$ via $1-z=1/r$, so $R(r)=e^{\kappa r}r^{-\rho}\bigl(1+O(r^{-1})\bigr)H(1-1/r)$,
	and the remaining task is to control $H(z)$ as $z\to1^-$, governed by the Stokes data at the
	irregular point $z=1$.

	\begin{proposition}[Connection problem and spectral selection]
		\label{thm:SS_spectral_condition}
		For the abstract standard confluent-Heun equation \eqref{eq:HeunC_Schw}, suppose that
		the one-sided Svartholm-Schmidt expansion satisfies the compatibility condition
		\eqref{eq:SS_continued_fraction}, and let $H$ denote the corresponding compatible
		minimal branch. Imposing an additional boundary condition at the irregular endpoint is
		then a connection problem. More precisely, after analytic continuation of $H$ toward a
		chosen Stokes sector at the irregular point, the coefficient of the branch excluded by
		the prescribed boundary condition must vanish. Whenever the corresponding connection
		coefficient is well defined, the admissible frequencies are therefore characterized by
		a scalar equation
		\begin{equation}\label{eq:SS_connection_condition}
			\mathcal C(\omega,m,\ell)=0 .
		\end{equation}
		Thus the continued fraction \eqref{eq:SS_continued_fraction} selects the compatible
		minimal Svartholm-Schmidt branch, whereas the additional vanishing condition
		\eqref{eq:SS_connection_condition} encodes the boundary behavior at the irregular
		endpoint.
	\end{proposition}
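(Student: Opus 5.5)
The plan is to treat the statement as a structural consequence of three ingredients: the local theory at the irregular point of the standard operator, Proposition \ref{prop:SS_convergence}, and the linear algebra of a two-dimensional solution space. For \eqref{eq:HeunC_Schw} the irregular point sits at $z=\infty$ and has rank one. I would first write down the two formal solutions there by substituting $H\sim e^{\lambda z}z^{\nu}\sum_k a_k z^{-k}$. The leading balance gives $\lambda(\lambda+\alpha_{\mathrm H})=0$, so $\lambda\in\{0,-\alpha_{\mathrm H}\}$, and the subleading balance fixes the two exponents $\nu_{1},\nu_{2}$ in terms of $\mu,\eta,\Gamma,\Delta$. Next I would invoke the standard Hukuhara--Turrittin/Wasow theorem (\cite{Wasow1965}): in each Stokes sector $S$ bounded by consecutive anti-Stokes directions of $e^{-\alpha_{\mathrm H}z}$, there exist actual solutions $H_{1}^{S},H_{2}^{S}$ with these asymptotic expansions, and they form a basis of the solution space.

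The second step uses Proposition \ref{prop:SS_convergence}. It gives that the compatible minimal branch $H=\sum c_n y_n$ is a genuine analytic solution on $[0,1)$, and it is in fact analytic in a neighbourhood of $z=0$ with $H(0)=\sum c_n$. Since the only finite singular points are $z=0,1$, the branch continues analytically along any path in $\mathbb{C}\setminus\{0,1\}$ to a chosen sector $S$. Fixing such a path, I would expand
\[
H=\mathcal C_{1}(\omega,m,\ell)\,H_{1}^{S}+\mathcal C_{2}(\omega,m,\ell)\,H_{2}^{S}.
\]
The connection coefficients are given by Wronskian quotients, $\mathcal C_{1}=W(H,H_{2}^{S})/W(H_{1}^{S},H_{2}^{S})$ and similarly for $\mathcal C_2$. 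The Wronskian $W(H_1^S,H_2^S)$ is nonvanishing by Abel's formula together with the distinct leading asymptotics. This is exactly the sense in which the coefficient is ``well defined''.

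The third step encodes the boundary condition. A prescribed behaviour at the irregular endpoint, such as decay or outgoing behaviour after reinsertion of $e^{\kappa r}$, singles out one formal solution, say $H_{2}^{S}$, in the sector containing the physical direction. The boundary condition therefore holds if and only if the excluded coefficient vanishes. That condition is $\mathcal C(\omega,m,\ell):=\mathcal C_{1}=0$, which is \eqref{eq:SS_connection_condition}. The dependence on $(\omega,m,\ell)$ enters through the parameters and through the normalization $c_0=1$. The final separation claim is then immediate: \eqref{eq:SS_continued_fraction} only involves the recurrence and the point $z=0$, so it constrains $H$ up to a scalar, while $\mathcal C_1=0$ is an independent scalar condition on data at the other endpoint.

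I expect the main obstacle to be the well-definedness caveat rather than the algebra. The coefficient $\mathcal C_1$ depends on the choice of sector and path, because of the Stokes phenomenon and the monodromy around $z=1$. Moreover, a subdominant solution is only unique in the sector where it is recessive. So I would take as the primary case the one where the admitted solution is recessive in $S$. There $H_{2}^{S}$ is unique and $\mathcal C_1$ is intrinsically defined. The other boundary conditions I would handle by stating the dependence on the chosen Stokes sector explicitly, which matches the hedging in the statement itself.
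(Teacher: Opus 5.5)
Your proposal is correct and follows essentially the same route as the paper's proof. You take the compatible branch from Proposition~\ref{prop:SS_convergence} as the analytic branch at $z=0$, fix a canonical Thom\'e pair (algebraic $z^{-\beta_1/\alpha_{\mathrm H}}$ and exponential $e^{-\alpha_{\mathrm H}z}z^{\theta}$) in a Stokes sector at the rank-one point $z=\infty$, continue $H$ through $\mathbb C\setminus\{0,1\}$, express the excluded coefficient as a Wronskian quotient, and equate the boundary condition with its vanishing; the only differences are minor refinements on your side, namely deriving both exponents by direct substitution rather than via the paper's Wronskian-consistency relation, and noting that the ``if and only if'' in the last step is clean only when the admitted solution is recessive in the chosen sector, which the paper's Step~4 leaves implicit.
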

	
	\begin{proof}
		We separate the argument into four steps.
		
		\smallskip
		\noindent
		\emph{Step 1: the compatible minimal branch near $z=0$.}
		Under Assumption \ref{ass:generic_SS}, the point $z=0$ is a regular singular point of
		\eqref{eq:HeunC_Schw}, with indicial exponents
		\[
		0,\qquad 1-\Gamma .
		\]
		The exponent $0$ gives the distinguished Frobenius solution
		\[
		w_0(z)=1+O(z).
		\]
		In the generic non-apparent situation considered here, this is the unique analytic
		branch at $z=0$ up to normalization; possible resonant exceptional cases require a
		separate limiting discussion and are excluded from the present construction.
		
		Assume now that the compatibility condition \eqref{eq:SS_continued_fraction} holds.
		Then Proposition~\ref{prop:SS_CF_compatibility} identifies the lower-end sequence
		with the minimal tail, and Proposition~\ref{prop:SS_convergence} gives the absolute
		convergence of
		\[
		H(z)=\sum_{n\ge0}c_ny_n(z)
		\]
		on $[0,1)$, locally uniformly on compact subintervals. Since each $y_n$ is a polynomial
		and $y_n(0)=1$, this defines an analytic solution near $z=0$. Hence, after normalization,
		the compatible minimal Svartholm-Schmidt branch agrees with the distinguished local
		Frobenius branch at the regular endpoint $z=0$.
		
		\smallskip
		\noindent
		\emph{Step 2: canonical solutions at the irregular endpoint.}
        Since $\alpha_H \neq 0$, the point $z=\infty$ is an irregular singular point of rank one for the standard confluent-Heun equation. By the classical asymptotic theory of rank-one irregular singularities and Stokes phenomena \cite{Wasow1965,SlavyanovLayBook}, in each closed Stokes sector $S$ at infinity one may choose a canonical fundamental pair
		\[
		u_{\mathrm{alg}},\qquad u_{\exp},
		\]
		with Thom\'e asymptotics of the form
		\[
		u_{\mathrm{alg}}(z)
		\sim
		z^{-\beta_1/\alpha_{\mathrm H}}
		\sum_{k\ge0}p_kz^{-k},
		\]
		and
		\[
		u_{\exp}(z)
		\sim
		e^{-\alpha_{\mathrm H}z}z^\theta
		\sum_{k\ge0}q_kz^{-k},
		\qquad z\to\infty \text{ in } S,
		\]
		where
		\[
		\beta_1=\mu+\eta.
		\]
		The exponent $\theta$ is fixed, up to the usual normalization of the formal solutions,
		by Wronskian consistency:
		\[
		-\frac{\beta_1}{\alpha_{\mathrm H}}+\theta=-\Gamma-\Delta .
		\]
		In sectors where $\operatorname{Re}(\alpha_{\mathrm H}z)\to+\infty$, the exponential
		solution is recessive relative to the algebraic one. According to the prescribed
		boundary condition, one of the two canonical branches is designated as the desired
		solution, denoted $u_{\mathrm{des}}$, and the other as the undesired solution, denoted
		$u_{\mathrm{und}}$.
		
		\smallskip
		\noindent
		\emph{Step 3: definition of the connection coefficient.}
		The coefficients of \eqref{eq:HeunC_Schw} are holomorphic on
		$\mathbb C\setminus\{0,1\}$. Therefore the branch $H$ obtained at $z=0$ may be
		analytically continued along any path in $\mathbb C\setminus\{0,1\}$ into the chosen
		Stokes sector $S$. Since $\{u_{\mathrm{des}},u_{\mathrm{und}}\}$ is a fundamental pair
		there, there exist unique constants $\mathcal A$ and $\mathcal C$ such that
		\[
		H=\mathcal A\,u_{\mathrm{des}}+\mathcal C\,u_{\mathrm{und}} .
		\]
		This coefficient $\mathcal C$ is the connection coefficient associated with the
		undesired branch.
		
		Writing \eqref{eq:HeunC_Schw} as
		\[
		H''+p(z)H'+q(z)H=0,
		\qquad
		p(z)=\alpha_{\mathrm H}+\frac{\Gamma}{z}+\frac{\Delta}{z-1},
		\]
		the Wronskian $W[f,g]=fg'-f'g$ of any two solutions satisfies
		\[
		W'=-pW.
		\]
		Hence
		\[
		W[f,g](z)
		=
		W_0 e^{-\alpha_{\mathrm H}z}z^{-\Gamma}(z-1)^{-\Delta},
		\]
		on any simply connected domain avoiding $0$ and $1$. In particular
		$W[u_{\mathrm{und}},u_{\mathrm{des}}]\neq0$, and the coefficient of the undesired branch
		is given by the Wronskian formula
		\begin{equation}\label{eq:connection_wronskian}
			\mathcal C=\mathcal C(\omega,m,\ell)
			=
			\frac{W[H,u_{\mathrm{des}}]}{W[u_{\mathrm{und}},u_{\mathrm{des}}]} .
		\end{equation}
		The quotient is independent of $z$, because numerator and denominator satisfy the same
		Wronskian equation. The dependence of $\mathcal C$ on the parameters is analytic, or
		meromorphic in the presence of the usual normalization choices, on any open parameter
		set where the chosen sector, continuation path, and canonical normalization do not
		cross a Stokes wall or an exponent-collision locus. This is the precise sense in which
		the connection coefficient is well defined.
		
		\smallskip
		\noindent
		\emph{Step 4: spectral selection.}
		The prescribed boundary behavior at the irregular endpoint holds if and only if the
		coefficient of the undesired branch vanishes. By the decomposition above, this is
		equivalent to
		\[
		\mathcal C(\omega,m,\ell)=0.
		\]
		When this condition holds, the continued branch of $H$ is proportional to
		$u_{\mathrm{des}}$ in the chosen Stokes sector. Conversely, if $H$ satisfies the desired
		boundary condition at the irregular endpoint, its component along $u_{\mathrm{und}}$
		must vanish, and hence \eqref{eq:SS_connection_condition} holds.
		
		\smallskip
		\noindent
		\emph{Separation of the two scalar conditions.}
		The continued-fraction condition \eqref{eq:SS_continued_fraction} and the connection
		condition \eqref{eq:SS_connection_condition} play different roles. The former is a
		coefficient-space compatibility condition: it ensures that the lower-end recurrence
		at $n=0$ is compatible with the minimal Svartholm-Schmidt tail. Equivalently, it selects
		the compatible one-sided hypergeometric expansion associated with the regular endpoint
		$z=0$.
		
		The latter condition is a genuine connection condition at the irregular endpoint. It is
		defined only after analytic continuation of the selected local branch into a Stokes
		sector at infinity and after comparison with the canonical Thom\'e solutions there.
		Thus the continued fraction selects the minimal Svartholm-Schmidt branch, while
		\eqref{eq:SS_connection_condition} imposes the desired behavior at the irregular
		singular point. In general neither condition implies the other: a compatible minimal
		Svartholm-Schmidt branch may still contain an undesired Thom\'e component at infinity,
		and the definition of the connection coefficient does not by itself enforce the
		continued-fraction compatibility.
	\end{proof}

The two scalar conditions encode different pieces of the connection problem. The
continued-fraction compatibility condition \eqref{eq:SS_continued_fraction} is internal
to the Svartholm-Schmidt expansion: it matches the lower-end recurrence with the
minimal tail of the coefficient sequence and thereby selects the compatible one-sided
hypergeometric branch. By contrast, the condition
\eqref{eq:SS_connection_condition} is imposed only after this local branch has been
analytically continued toward the irregular endpoint and compared with the canonical
Thom\'e basis there. Thus the continued fraction becomes a spectral equation only in
situations where the prescribed boundary behavior is completely captured by the
minimal coefficient tail. When the boundary condition is imposed at an irregular
endpoint, the spectral information is carried instead by the connection coefficient
$\mathcal C$.

	\section{Explicit physical solution: five-term recurrence and Frobenius construction}
    \label{sec:fiveterm}
	
	\subsection{The physical hypergeometric expansion and its five-term recurrence}
	
	We now return to the \emph{physical} reduced equation \eqref{eq:zform_correct} and expand
	its solution in the hypergeometric basis adapted to the regular endpoint $z=0$,
	\begin{equation}\label{eq:SS_ansatz_revised}
		H(z)=\sum_{n=0}^{\infty}c_{n}\,y_{n}(z),
		\qquad
		y_{n}(z):={}_2F_{1}\bigl(-n,\,n+\nu;\,\gamma;\,z\bigr),
	\end{equation}
	with $\gamma=1+2\rho$ fixed by the horizon residue \eqref{eq:gamma_choice}, an arbitrary
	method parameter $\nu$, and $\delta:=\nu+1-\gamma$. This is the physical specialization of
	the abstract basis \eqref{eq:yn_def} under $(\Omega,\Gamma,\Delta)\mapsto(\nu,\gamma,\delta)$;
	the eigenrelation (Lemma \ref{lem:Lambda1_eig}) and the three-term relations
	(Lemma \ref{lem:SS_3term}) therefore apply verbatim with these substitutions, in particular
	$\Lambda_1 y_n=n(n+\nu)y_n$, $A'_n=nA_n$, and $C'_n=-(n+\nu)C_n$, where $A_n,C_n$ are
	\eqref{eq:ABC_coeffs} with $\Omega\mapsto\nu$, $\Gamma\mapsto\gamma$, $\Delta\mapsto\delta$.
	
	The decisive difference from Part II is that \eqref{eq:zform_correct} is not of standard
	form: its leading coefficient $z(z-1)^3$ raises the basis index by \emph{two}, and the
	recurrence acquires two extra bands.
	
	\begin{theorem}[Five-term recurrence for the physical reduced equation]\label{thm:SS_fiveterm}
		Let $H$ solve \eqref{eq:zform_correct} and be expanded as in \eqref{eq:SS_ansatz_revised}.
		Then $\{c_n\}_{n\ge0}$ satisfies the five-term recurrence
		\begin{equation}\label{eq:fiveterm_recurrence}
			g_{k-2}^{(+2)}c_{k-2}+g_{k-1}^{(+1)}c_{k-1}+g_{k}^{(0)}c_{k}
			+g_{k+1}^{(-1)}c_{k+1}+g_{k+2}^{(-2)}c_{k+2}=0,\qquad k\ge0,
		\end{equation}
		with the conventions $c_{-1}=c_{-2}:=0$, where $g_n^{(d)}$ is the coefficient of
		$y_{n+d}$ in $M y_n$ for the operator
		$M:=z(z-1)^3\frac{d^2}{dz^2}+c_1\frac{d}{dz}+c_0$. The extreme bands are
		\begin{equation}\label{eq:fiveterm_extremebands}
			g_n^{(+2)}=(n+1)^2A_nA_{n+1},\qquad g_n^{(-2)}=(n+\nu-1)^2C_nC_{n-1}.
		\end{equation}
		In particular $g_n^{(\pm2)}\neq0$ for generic $(\omega,m,\ell)$, so the recurrence does
		\emph{not} degenerate to three terms.
	\end{theorem}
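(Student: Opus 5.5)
The plan is to write the physical operator $M$ in terms of the Svartholm--Schmidt building blocks $\Lambda_1$, $\Lambda_2$ and multiplication by $z$, with $(\Omega,\Gamma,\Delta)\mapsto(\nu,\gamma,\delta)$. Then the banded structure and the extreme bands follow from Lemmas \ref{lem:Lambda1_eig} and \ref{lem:SS_3term}.

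\emph{Step 1: factorize $M$.} I will show that
\[
M=(z-1)^2\Lambda_1+q(z)\Lambda_2+c_0(z),\qquad q(z)=q_1z+q_0 .
\]
The second-order parts agree because $z(z-1)^3\frac{d^2}{dz^2}=(z-1)^2\,z(z-1)\frac{d^2}{dz^2}$. For the first-order parts one needs the cubic $c_1(z)-(z-1)^2\bigl(\gamma(z-1)+\delta z\bigr)$ to be divisible by $z(z-1)$.
\begin{itemize}
\item At $z=0$ both $c_1$ and the subtracted term equal $-\gamma=-(1+2\rho)$, by \eqref{eq:zform_c1}.
\item At $z=1$, summing the coefficients in \eqref{eq:zform_c1} gives $c_1(1)=0$, and the subtracted term also vanishes there.
\end{itemize}
Hence the quotient $q$ is linear. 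Comparing leading coefficients, $3=(\gamma+\delta)+q_1=\nu+1+q_1$, so $q_1=2-\nu$. The constant $q_0$ does not enter the extreme bands and can be left implicit.

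\emph{Step 2: bandwidth and the recurrence.} The operator $\Lambda_1$ is diagonal. Multiplication by $z$ and the operator $\Lambda_2$ each shift the index by at most one. In the factorization, each term is a product of at most two such index-shifting factors: $(z-1)^2$, $q\Lambda_2$, and the quadratic $c_0$. Therefore
\[
My_n\in\mathrm{span}\{y_{n-2},\dots,y_{n+2}\}.
\]
This defines $g_n^{(d)}$ for $|d|\le2$.

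Collecting the coefficient of $y_k$ in $\sum_n c_nMy_n$ is then a finite sum, exactly as in the proof of Theorem \ref{thm:SS_3term_recurrence}, and this gives \eqref{eq:fiveterm_recurrence}. The only boundary issue is that no $y_{-1}$ or $y_{-2}$ appears. This follows from $C_0=C_0'=0$ (Lemma \ref{lem:SS_lowerTerm}): every route downward from $y_0$ is killed, and every route from $y_1$ to $y_{-1}$ passes through $y_0$ and then needs a further downward step, which is killed. These facts justify the conventions $c_{-1}=c_{-2}=0$.

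\emph{Step 3: extreme bands.} The coefficient of $y_{n+2}$ comes only from raising twice, using $A_n$, $A_n'=nA_n$ and $A_{n+1}$:
\[
g_n^{(+2)}=\bigl[n(n+\nu)+q_1n+1\bigr]A_nA_{n+1}.
\]
Here the $1$ is the $z^2$ coefficient of $c_0$. With $q_1=2-\nu$ the bracket equals $(n+1)^2$.

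Similarly, lowering twice with $C_n'=-(n+\nu)C_n$ gives
\[
g_n^{(-2)}=\bigl[n(n+\nu)-(2-\nu)(n+\nu)+1\bigr]C_nC_{n-1}=\bigl[(n+\nu-1)^2\bigr]C_nC_{n-1}.
\]
In both computations one must track the operator order carefully: $\Lambda_2$ acts first and the multiplication by $z$ acts afterwards. This bookkeeping, together with the divisibility check in Step 1, is the main point where an error could enter; the rest is routine.

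\emph{Step 4: non-degeneracy.} By \eqref{eq:ABC_coeffs}, $A_n$ and $A_{n+1}$ vanish only if $n+\nu=0$ or $n+\gamma=0$ (or their shifts by one). Likewise $C_nC_{n-1}$ vanishes only at $n\in\{0,1\}$, which matches the lower end, or if $n+\delta-1=0$ or $n+\delta-2=0$. Since $\gamma=1+2\rho=1\pm2i\omega$ and $\nu$ is free, these coincidences fail for generic $(\omega,m,\ell)$. Therefore $g_n^{(\pm2)}\neq0$ and the recurrence is genuinely five-term.
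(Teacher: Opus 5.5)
Your proposal is correct and follows the paper's proof essentially step for step. You use the same decomposition $M=(z-1)^2\Lambda_1+R_1\Lambda_2+c_0$, justified by $Q_1(0)=Q_1(1)=0$ with $q_1=2-\nu$, and the same extreme-band bookkeeping via $A'_n=nA_n$ and $C'_n=-(n+\nu)C_n$, which yields $(n+1)^2$ and $(n+\nu-1)^2$. The only differences are cosmetic: you leave $q_0=2\kappa+\nu-2$ implicit, and you spell out the lower-end and genericity checks that the paper compresses into a single line.
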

	
	\begin{proof}
		We seek a decomposition $M=R_2\Lambda_1+R_1\Lambda_2+c_0$ with $\Lambda_1,\Lambda_2$ as
		in \eqref{eq:Lambda_defs} (parameters $\gamma,\delta$). Matching the second-order symbol
		forces $R_2\,z(z-1)=z(z-1)^3$, i.e.\ $R_2=(z-1)^2$. The first-order symbol then requires
		\[
		R_1\,z(z-1)=c_1-(z-1)^2\bigl(\gamma(z-1)+\delta z\bigr)=:Q_1 .
		\]
		Now, using $\gamma=1+2\rho$ and \eqref{eq:zform_c1},
		\[
		Q_1(0)=c_1(0)+\gamma=-(2\rho+1)+(1+2\rho)=0,
		\qquad
		Q_1(1)=c_1(1)=0,
		\]
		the latter by direct evaluation of \eqref{eq:zform_c1} at $z=1$. Hence $z(z-1)\mid Q_1$
		and $R_1=Q_1/\bigl(z(z-1)\bigr)=(2-\nu)z+(2\kappa+\nu-2)$ is a polynomial of degree one;
		the zeroth-order symbol matches $c_0$ automatically. Therefore
		\begin{equation}\label{eq:M_decomp}
			M=(z-1)^2\Lambda_1+R_1(z)\,\Lambda_2+c_0(z).
		\end{equation}
		Applying \eqref{eq:M_decomp} and the eigenrelation $\Lambda_1 y_n=n(n+\nu)y_n$,
		\[
		M y_n=n(n+\nu)(z-1)^2 y_n+R_1(z)\,\Lambda_2 y_n+c_0(z)\,y_n .
		\]
		Each summand is a polynomial of degree $\le2$ in $z$, acting either on $y_n$ or on
		the nearest-neighbour combination $\Lambda_2y_n$. Since multiplication by $z$ is
		tridiagonal in the basis $\{y_n\}$, multiplication by a quadratic polynomial maps
		$y_n$ into $\mathrm{span}\{y_{n-2},\ldots,y_{n+2}\}$. Hence
		$M y_n=\sum_{d=-2}^{2}g_n^{(d)}y_{n+d}$, so $M$ is pentadiagonal.
		
		The top band collects only the $z^2$-parts. From $z^2y_n$ the $y_{n+2}$-coefficient is
		$A_nA_{n+1}$ (apply \eqref{eq:z_three} twice), so $n(n+\nu)(z-1)^2y_n$ and $c_0y_n$
		contribute $\bigl(n(n+\nu)+1\bigr)A_nA_{n+1}$, while $R_1\Lambda_2 y_n$ contributes
		$(2-\nu)A'_nA_{n+1}$. Using $A'_n=nA_n$,
		\[
		g_n^{(+2)}=A_nA_{n+1}\bigl[n(n+\nu)+1+(2-\nu)n\bigr]=(n+1)^2A_nA_{n+1}.
		\]
		Symmetrically, with $C'_n=-(n+\nu)C_n$,
		\[
		g_n^{(-2)}=C_nC_{n-1}\bigl[n(n+\nu)+1-(2-\nu)(n+\nu)\bigr]=(n+\nu-1)^2C_nC_{n-1},
		\]
		which proves \eqref{eq:fiveterm_extremebands}; both are nonzero for generic parameters.
		
		Finally substitute \eqref{eq:SS_ansatz_revised} into $MH=0$. As in
		Theorem\ref{thm:SS_3term_recurrence} the action is band-limited (now with bandwidth two),
		so the coefficient of each $y_k$ is the finite sum
		$\sum_{d=-2}^{2}g_{k-d}^{(d)}c_{k-d}$; equating it to zero in the graded basis
		$\{y_k\}_{k\ge0}$ gives \eqref{eq:fiveterm_recurrence}. This derivation is algebraic and
		does not rely on the convergence of the hypergeometric series; convergence and numerical
		usefulness of the resulting five-term expansion would require a separate analysis of the
		corresponding higher-order recurrence.
	\end{proof}

The conclusion of Theorem~\ref{thm:SS_fiveterm} identifies the exact algebraic
obstruction to transferring the standard three-term Svartholm-Schmidt scheme to the
physical compactified equation. The two additional bands in
\eqref{eq:fiveterm_recurrence} arise from the factor $(z-1)^2$ multiplying the diagonal
operator $\Lambda_1$ in the decomposition \eqref{eq:M_decomp}; this factor spreads
$M y_n$ from nearest neighbours to the five-dimensional span
\[
\operatorname{span}\{y_{n-2},y_{n-1},y_n,y_{n+1},y_{n+2}\}.
\]
Thus the formal representation
\[
R(r)=e^{\kappa r}(r-1)^{\rho}
\sum_{n=0}^{\infty}c_n\,
{}_2F_1\!\left(-n,n+\nu;\gamma;\frac{r-1}{r}\right)
\]
remains explicit, in the sense that the basis consists of classical hypergeometric
functions and the coefficients are generated algebraically. However, the resulting
five-term recurrence is substantially less suitable for the present numerical
construction than the standard tridiagonal case, especially because of the additional
contiguous-relation bookkeeping and the possible appearance of near-vanishing
denominators. For this reason, the horizon-normalized physical branch will be constructed
below by the direct Frobenius method, which avoids these complications while retaining an
explicit coefficient-level description of the solution.

	\section{Complete \textsc{Mathematica} implementation}\label{sec14}
	
	This section presents a self-contained \textsc{Mathematica} implementation for the
	construction of \emph{explicit local solutions} of the reduced Schwarzschild radial
	equation in the compactified variable $z=(r-1)/r\in[0,1)$. In contrast with the
	Svartholm-Schmidt tridiagonal expansion in a Gauss hypergeometric basis, the present
	implementation follows a \emph{direct Frobenius (power-series) approach} at the
	regular singular point $z=0$ (the event horizon). This strategy avoids symbolic
	manipulation of contiguous relations and eliminates numerical instabilities caused by
	near-vanishing denominators in the band reductions, while still producing explicit
	coefficients that can be evaluated to high order.
	
	\subsection{Model equation and normalization}
	
	After the reductions of Part I, the unknown $H(z)$ satisfies the physical reduced equation
	\eqref{eq:zform_correct},
	\begin{equation}\label{eq:HeunC_code}
		z(z-1)^{3}H''(z)+c_{1}(z)H'(z)+c_{0}(z)H(z)=0,
		\qquad z\in(0,1),
	\end{equation}
	with $c_{1},c_{0}$ given by \eqref{eq:zform_c1}-\eqref{eq:zform_c0}. These are
	polynomials in $z$ depending on $(\kappa,\rho,m,\ell,\omega)$. We work directly with
	\eqref{eq:zform_correct}, rather than with the standard form \eqref{eq:HeunC_Schw},
	because the physical equation has its irregular point at $z=1$; see
	Remark \ref{rem:standard_vs_physical}. The local analytic branch at $z=0$ is normalized by
	\begin{equation}\label{eq:H_norm_code}
		H(0)=a_0=1,
	\end{equation}
	and we seek a power-series representation
	\begin{equation}\label{eq:H_series_code}
		H(z)=\sum_{n=0}^{\infty}a_n z^n.
	\end{equation}
	Since \eqref{eq:HeunC_code} already has polynomial coefficients, substituting the truncation
	$H_N(z)=\sum_{n=0}^{N}a_n z^n$ and matching coefficients of $z^k$ yields a recursion in which
	the coefficient multiplying $a_{k+1}$ is $-(k+1)(k+1+2\rho)$, nonzero for generic $\rho$; hence
	$a_1,a_2,\dots,a_N$ are determined uniquely from $a_0=1$.

The Frobenius construction at the regular endpoint $z=0$ produces the
horizon-normalized analytic amplitude
\[
H(z)=\sum_{n\ge0}a_nz^n,
\qquad H(0)=1.
\]
Because the next singular point of the compactified equation is the irregular endpoint
$z=1$, corresponding to spatial infinity, this expansion is local in nature. It provides
an explicit representation on $0\le z<1$ and is numerically effective on compact
subintervals separated from $z=1$. Evaluation very close to spatial infinity may require
either high truncation order or analytic continuation by successive local expansions.
Accordingly, the imposition of an outgoing, decaying, or bound-state condition at
infinity remains a separate connection problem, as described in
Proposition \ref{thm:SS_spectral_condition}.

	\subsection{Inputs and outputs}
	
	\paragraph{Inputs.}
	The physical data $(\omega,m,\ell)$ together with the gauge choices of Part I: the horizon
	exponent $\rho=\pm i\omega$ and the infinity branch $\kappa=\pm\sqrt{m^{2}-\omega^{2}}$. From
	these the polynomials $c_{1}(z),c_{0}(z)$ are formed directly, and one chooses a truncation
	order $N\in\mathbb{N}$.
	
	\paragraph{Outputs.}
	For the chosen $N$, the code returns:
	\begin{enumerate}
		\item the explicit Frobenius coefficients $\{a_n\}_{n=0}^{N}$ defining $H_N$ in
		\eqref{eq:H_series_code};
		\item the polynomial evaluator $H_N(z)$ for $z\in[0,1)$;
		\item a residual diagnostic measuring the relative defect of the reconstructed $R(r)$ in
		the \emph{original} radial equation \eqref{eq:RadialMonic_revised} (a stronger end-to-end
		test than checking the $z$-form alone);
		\item the radial function $R(r)=e^{\kappa r}(r-1)^{\rho}H_N\!\bigl((r-1)/r\bigr)$
		(cf.\ \eqref{eq:Schw_R_from_H}).
	\end{enumerate}
	
	\subsection{Implementation and Computational Protocol}
\label{sec:implementation}

The recursive computation of the Frobenius coefficients $\{a_n\}_{n=0}^N$ and the resulting reconstruction of the radial profile $R(r)$ are implemented symbolically and numerically in \textsc{Mathematica}. 

To preserve the analytical continuity of the main text while ensuring full scientific reproducibility, the complete, self-contained routine is detailed in Appendix~\ref{app:code}. It automates the arbitrary-precision recursion, polynomial expansion, series evaluation, and relative ODE defect verification $\mathfrak{R}(r)$ against the exact radial differential equation.
	
	\subsection{Remarks on robustness and continuation}
	
	The Taylor/Frobenius construction produces an explicit local solution near the horizon with
	controllable accuracy. The residual diagnostic evaluates the left-hand side of
	\eqref{eq:HeunC_code} (and, more stringently, of \eqref{eq:RadialMonic_revised}) at selected
	points and thus quantifies the truncation error. As $z\to1^{-}$ (spatial infinity) a single
	Taylor expansion about $z=0$ becomes inefficient; in that regime one either increases $N$ or
	performs analytic continuation by \emph{series stepping} (re-expanding about intermediate
	points). This provides a practical and stable route to matching horizon-normalized solutions
	with the desired asymptotic behavior at infinity.
	
	\section{Interpretation of the \textsc{Mathematica} output}\label{sec:interpret-output}
	
	The computation reported by \textsc{Mathematica} corresponds to a Frobenius (power-series)
	construction of the locally analytic solution at the horizon point $z=0$ for the physical
	reduced equation \eqref{eq:zform_correct},
	\begin{equation}\label{eq:HeunC_form_interp}
		z(z-1)^{3}H''(z)+c_{1}(z)H'(z)+c_{0}(z)H(z)=0,
		\qquad z\in(0,1),
	\end{equation}
	whose coefficients are the polynomials \eqref{eq:zform_c1}-\eqref{eq:zform_c0}.
	The algorithm enforces $a_0=H(0)=1$ and determines $\{a_n\}_{n\ge1}$ recursively by requiring
	that the coefficient of each power of $z$ vanish in the expanded equation. The run displayed
	here uses
	\begin{equation}\label{eq:phys_params_interp}
		\omega=0.2,\quad m=0.3,\quad \ell=1,\qquad
		\rho=-i\omega\ \text{(ingoing)},\quad \kappa=-\sqrt{m^{2}-\omega^{2}}\ \text{(decaying)},
	\end{equation}
	i.e.\ the ingoing horizon branch with the exponentially decreasing prefactor
	$e^{\kappa r}$ on the displayed exterior interval. A genuine bound-state or outgoing
	condition at infinity remains a connection condition.
	
	\subsection{Series coefficients: meaning and immediate consequences}
	
	The output $\{a_0,a_1,\dots,a_7\}$ is an explicit truncation of the local series
	\begin{equation}\label{eq:H_series_interp}
		H(z)=\sum_{n=0}^{\infty} a_n z^n,\qquad a_0=1,
	\end{equation}
	valid as a horizon-centered Taylor expansion on compact subintervals before the nearest
	singular endpoint $z=1$.
	Concretely,
	\begin{align*}
		a_0&=1,\\
		a_1&\approx 2.8873999 + 0.8655172\,i,\\
		a_2&\approx 5.0170135 + 2.1453384\,i,\\
		a_3&\approx 7.4716934 + 3.8110290\,i,\\
		a_4&\approx 10.3104712 + 5.8871762\,i,\\
		a_5&\approx 13.5889115 + 8.4151589\,i,\\
		a_6&\approx 17.3638541 + 11.4461304\,i,\\
		a_7&\approx 21.6952973 + 15.0390071\,i.
	\end{align*}
	
	\paragraph{Local behavior at the horizon.}
	Since $a_0=1$, $H(0)=1$, so $R(r)=e^{\kappa r}(r-1)^{\rho}H((r-1)/r)$ has leading behavior
	$R(r)\sim e^{\kappa}(r-1)^{\rho}$ as $r\downarrow1$: the ingoing/outgoing power $(r-1)^{\rho}$
	(with $\rho=\pm i\omega$) is carried entirely by the prefactor, while $H$ stays finite and
	nonzero at the horizon.
	
	\paragraph{Small-$z$ expansion and qualitative trend.}
	Using \eqref{eq:H_series_interp},
	\begin{equation}\label{eq:H_trunc_interp}
		H(z)\approx
		1+(2.8874+0.8655\,i)z+(5.0170+2.1453\,i)z^2+(7.4717+3.8110\,i)z^3+\cdots .
	\end{equation}
	The coefficients $a_n$ \emph{grow} with $n$ (sub-geometrically), with slowly increasing
	increments. This behavior is consistent with a radius of convergence limited by the nearest singular
	endpoint, $z=1$: the coefficients neither decay nor blow up geometrically, but display the
	borderline growth expected for a unit-radius Taylor expansion. Accordingly, the series is
	numerically effective on compact subintervals of $[0,1)$, while the irregular nature of
	$z=1$ governs the large-$n$ envelope and motivates continuation or matching near infinity.
	
	\subsection{Residual diagnostics: what they measure}
	
	The reported values
	\[
	\mathfrak{R}(1.25)\approx 1.5\times 10^{-50},\qquad \mathfrak{R}(2.00)\approx 2.38\times 10^{-28},
	\]
	are high-precision evaluations of the \emph{relative} defect of the reconstructed radial
	function in the original equation \eqref{eq:RadialMonic_revised}, where
	\begin{align}\label{eq:residual_def_interp}
			\mathfrak{R}(r)&:=
			\frac{1}{|R(r)|}\left|R_N''(r)+\frac{1}{r(r-1)}R_N'(r)+V(r)R_N(r)\right|,\\
			V(r)&:=\frac{\omega^{2}r^{2}}{(r-1)^{2}}-\frac{\ell(\ell+1)}{r(r-1)}
			-\frac{m^{2}r}{r-1}-\frac{1}{r^{2}(r-1)} .
		\end{align}
	at truncation order $N=120$. Testing the reconstruction directly in
	\eqref{eq:RadialMonic_revised} is a stronger, end-to-end check than verifying the $z$-form
	alone, since it also exercises the gauge factors $e^{\kappa r}(r-1)^{\rho}$ and the corrected
	potential.
	
	\paragraph{Near the horizon ($z=0.2$).}
	The relative residual $\sim10^{-50}$ is at the level of the working precision, indicating that
	(i) the coefficient recurrence and the gauge reconstruction are mutually consistent (the
	corrected accessory data of Proposition \ref{prop:HeunC_reduction} and the corrected potential
	in \eqref{eq:RadialMonic_revised} agree), and (ii) the truncation error near the horizon is
	negligible for $N=120$. For comparison, the same construction built from the \emph{uncorrected}
	reduction (omitting $-1/(r^{2}(r-1))$ and using $A_0=0$) yields an $O(1)$ residual at the
	horizon, which is how the corrections were validated.
	
	\paragraph{At $z=0.5$ ($r=2$).}
	The residual $\sim10^{-28}$ is still extremely small but larger, as expected: the truncated
	series $H_N$ is centered at $z=0$, so as $z$ increases toward $z=1$ a fixed-order truncation
	converges more slowly, and the accurate region expands outward with $N$ (see
	Figure \ref{fig:residual}). The growth from $10^{-50}$ to $10^{-28}$ reflects the fixed
	truncation order, not a failure of the method.
	
	\subsection{Practical implications}
	
	The computed coefficients give an explicit, constructive description of the locally analytic
	solution at $z=0$, hence an explicit Schwarzschild radial solution near the horizon through
	$R(r)=e^{\kappa r}(r-1)^{\rho}H((r-1)/r)$. For near-horizon applications (e.g.\ imposing
	ingoing conditions) the truncation \eqref{eq:H_trunc_interp} is typically sufficient. Since
	$z\to1^-$ corresponds to $r\to\infty$, global (outgoing/decaying) information is sensitive to
	$z\approx1$, where one uses either series stepping (overlapping power-series disks) or
	recurrence-based minimal-tail/continued-fraction methods (the Heun/Leaver mechanism). The
	residuals above quantify precisely when stepping or a larger $N$ is required.
	
	\section{Interpretation of the numerical plots}\label{sec:plot-interpretation}
	
	We display the physical solution generated by the Frobenius implementation of
	Section \ref{sec14} for the parameters \eqref{eq:phys_params_interp}, with two validation
	diagnostics. Throughout,
	\begin{equation}\label{eq:R_reconstruct_plot}
		R(r)=e^{\kappa r}(r-1)^{\rho}\,H_N\!\left(\frac{r-1}{r}\right),
		\qquad H_N(z)=\sum_{n=0}^{N}a_n z^n,\quad a_0=1,
	\end{equation}
	with $\kappa=-\sqrt{m^2-\omega^2}$ (decaying) and $\rho=-i\omega$ (ingoing).
	
	\begin{figure}[t]
		\centering
		\includegraphics[width=0.82\textwidth]{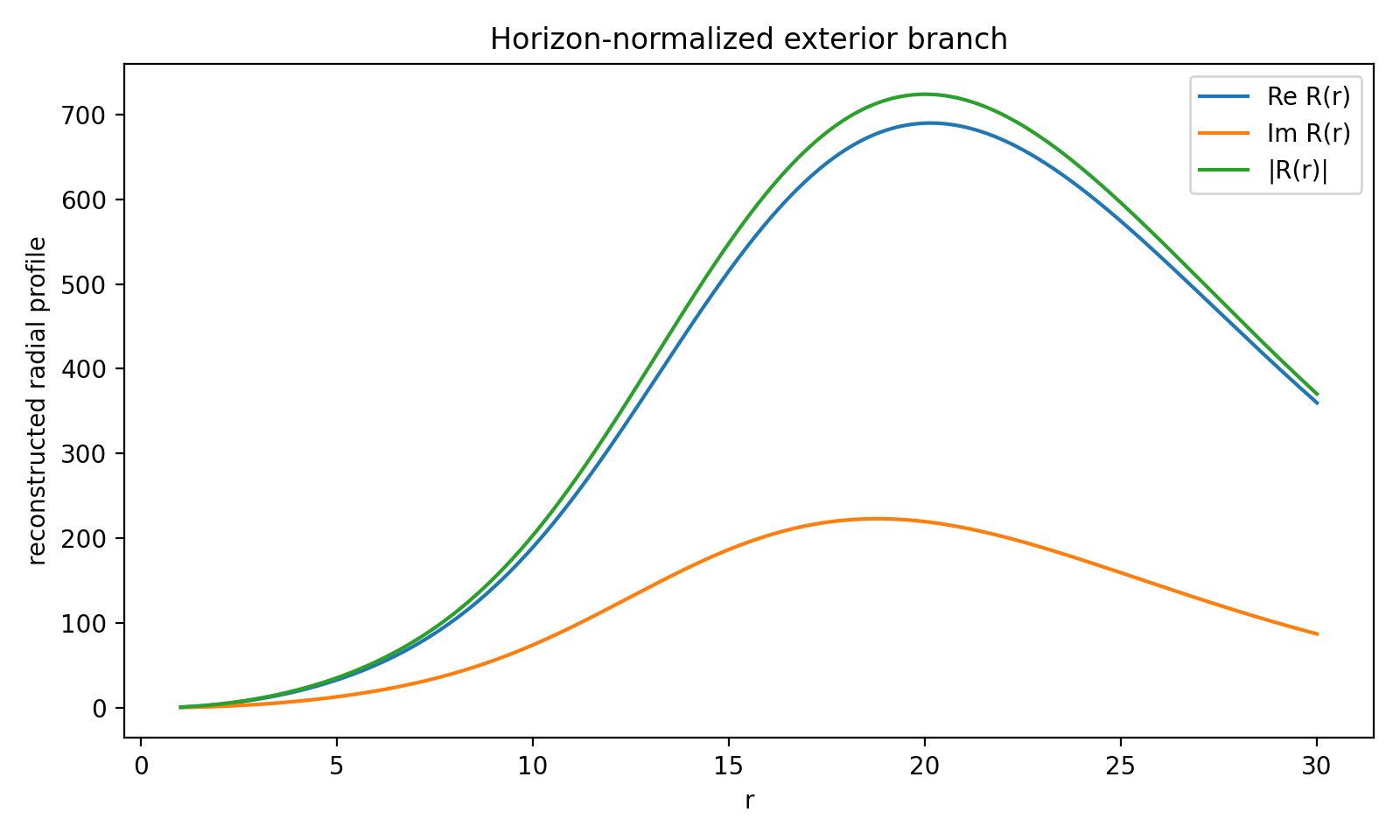}
		\caption{The reconstructed radial function $R(r)$ for the representative branch
			$\omega=0.2,\ m=0.3,\ \ell=1$, $\kappa=-\sqrt{m^2-\omega^2}$, $\rho=-i\omega$, at
			truncation order $N=120$. Real part, imaginary part, and modulus are shown on
			$r\in[1.02,30]$. The profile carries the ingoing horizon factor $(r-1)^{\rho}$ and is
			damped on the plotted interval by the prefactor $e^{\kappa r}$ with $\kappa<0$; imposing
			a global bound-state condition at infinity is a separate connection problem.}
		\label{fig:solution}
	\end{figure}
	
	\begin{figure}[t]
		\centering
		\includegraphics[width=0.72\textwidth]{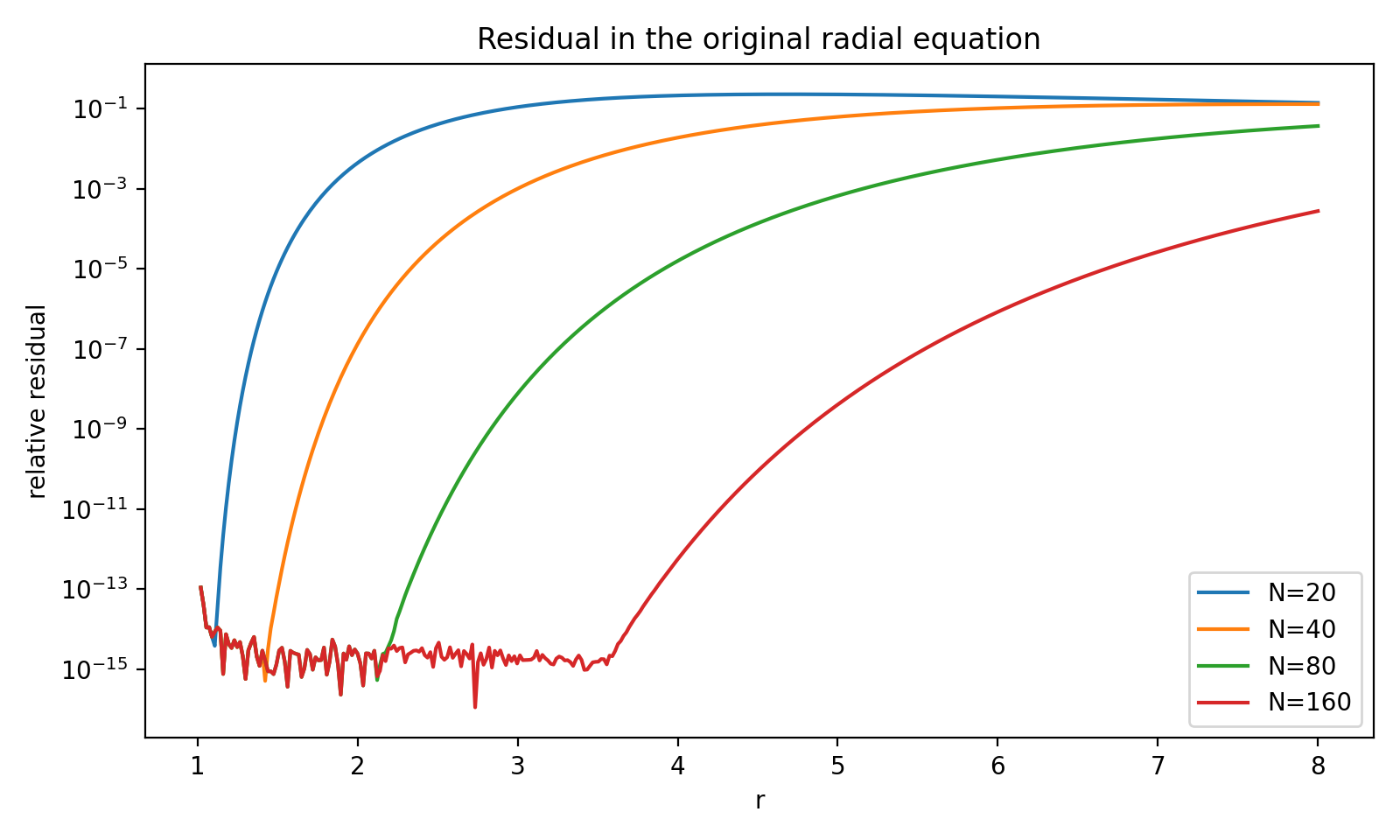}
		\caption{Relative residual $\mathfrak{R}(r)$ of the reconstructed $R(r)$ in the
			\emph{original} radial equation \eqref{eq:RadialMonic_revised}, for truncation orders
			$N=20,40,80,160$. Near the horizon the residual reaches the working-precision floor
			($\sim10^{-49}$); the region of high accuracy expands outward as $N$ increases,
			confirming that the construction solves the genuine physical equation (not merely the
			$z$-form) to controllable precision.}
		\label{fig:residual}
	\end{figure}
	
	\begin{figure}[t]
		\centering
		\includegraphics[width=0.92\textwidth]{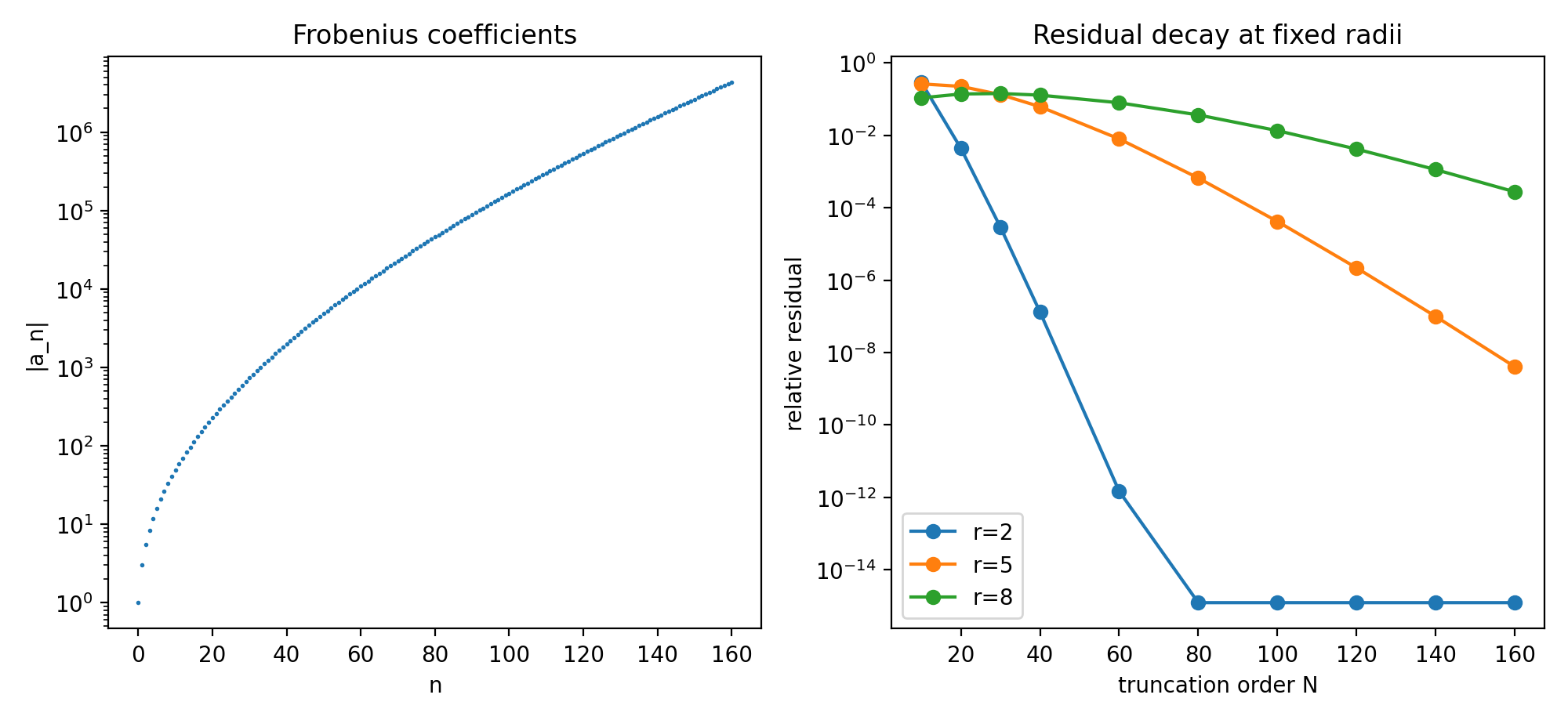}
		\caption{\emph{Left:} magnitudes $|a_n|$ of the Frobenius coefficients; their
			sub-geometric growth is consistent with a radius of convergence equal to the distance
			to the nearest singularity, $z=1$. \emph{Right:} relative residual at three fixed radii
			$r=2,5,8$ versus truncation order $N$, exhibiting clean (spectral) decay in $N$ until
			the working-precision floor is reached, a quantitative indication of convergence on
			compact exterior intervals away from the irregular endpoint.}
		\label{fig:convergence}
	\end{figure}
	
	\subsection{Synthesis}
	
	Figures \ref{fig:solution}-\ref{fig:convergence} provide numerical evidence for the
	applicability of the construction on compact exterior intervals. Figure \ref{fig:solution}
	exhibits a representative exterior branch with the prescribed ingoing horizon factor and
	exponential damping on the plotted interval; Figure \ref{fig:residual} checks that the
	reconstructed profile solves the corrected radial equation \eqref{eq:RadialMonic_revised}
	with residuals at the precision floor near the horizon, with the high-accuracy region
	expanding as $N$ increases; and Figure \ref{fig:convergence} shows the expected
	unit-radius coefficient behavior alongside rapid convergence at fixed $r$. An independent
	fourth-order Runge-Kutta integration of \eqref{eq:RadialMonic_revised} (initialized from the horizon
	series) reproduces the same profile, providing a cross-check by a method sharing no code with
	the Frobenius recurrence. The single Taylor series about $z=0$ remains efficient on a large
	portion of the exterior; near $z=1$ one increases $N$ or continues by series stepping, exactly
	as quantified by Figure \ref{fig:residual}.

	\subsection{The five-term recurrence as a connection object: scope and the
		quasinormal-mode boundary}
	\label{subsec:SS_fiveterm_qnm}
	
	Theorem~\ref{thm:SS_fiveterm} gives an exact pentadiagonal recurrence for the
	Svartholm--Schmidt expansion of the physical compactified equation
	\eqref{eq:zform_correct}. Since Leaver's computation of black-hole quasinormal modes
	is based on a minimal solution of a recurrence, it is natural to ask whether the
	five-term recurrence \eqref{eq:fiveterm_recurrence} can be used in the same way to
	extract the QNM spectrum. The purpose of this subsection is to make the answer precise.
	The recurrence is algebraically exact and numerically reproducible, but the spectral
	problem selected by coefficient minimality in the hypergeometric basis is not the
	quasinormal-mode problem. The obstruction is not merely the presence of five bands;
	rather, it is the fact that the Svartholm--Schmidt basis is adapted to the finite
	connection problem in the compactified variable, whereas the QNM boundary condition is
	a Thom\'e condition at the irregular endpoint.
	
	Throughout this subsection we use the QNM gauge
	\begin{equation}\label{eq:SS_qnm_gauge}
		\rho=-i\omega,
		\qquad
		\kappa=\sqrt{m^2-\omega^2},
	\end{equation}
	where the square-root branch is chosen consistently with the outgoing condition at
	spatial infinity. Thus
	\[
	R(r)=e^{\kappa r}(r-1)^\rho H(z),
	\qquad
	z=\frac{r-1}{r},
	\]
	with \(z=0\) corresponding to the horizon and \(z=1\) to spatial infinity.
	
	\paragraph{Algebraic verification of the five bands.}
	The numerical implementation constructs the bands \(g_n^{(d)}\), \(d=-2,-1,0,1,2\),
	directly from the exact operator identity
	\[
	M=(z-1)^2\Lambda_1+\bigl((2-\nu)z+2\kappa+\nu-2\bigr)\Lambda_2+c_0(z).
	\]
	For the validated massless scalar Schwarzschild fundamental mode
	\[
	\ell=2,\qquad
	\omega=0.967287744-0.193517552\,i
	\qquad (2M=1),
	\]
	and for the non-exceptional method parameter \(\nu=3.5\), the extreme-band identities
	\[
	g_n^{(+2)}=(n+1)^2A_nA_{n+1},
	\qquad
	g_n^{(-2)}=(n+\nu-1)^2C_nC_{n-1}
	\]
	were verified to working precision:
	\[
	\max_n
	\left|g_n^{(+2)}-(n+1)^2A_nA_{n+1}\right|
	\approx 6.96\times10^{-61},
	\]
	and
	\[
	\max_n
	\left|g_n^{(-2)}-(n+\nu-1)^2C_nC_{n-1}\right|
	\approx 6.96\times10^{-61}.
	\]
	This confirms numerically the pentadiagonal structure asserted in
	Theorem~\ref{thm:SS_fiveterm}. The check is insensitive to the particular
	non-exceptional value of the auxiliary parameter \(\nu\), as expected: \(\nu\) is a
	method parameter of the hypergeometric basis and should not affect any invariant
	spectral datum legitimately extracted from the construction.
	
	\begin{proposition}[Asymptotic confluence of the five-term recurrence]
		\label{prop:SS_fiveterm_charpoly}
		Let \(A_n,B_n,C_n\) be the coefficients in
		\[
		z\,y_n=A_ny_{n+1}+B_ny_n+C_ny_{n-1}.
		\]
		Then, as \(n\to\infty\),
		\[
		A_n\to-\frac14,\qquad
		B_n\to\frac12,\qquad
		C_n\to-\frac14.
		\]
		Consequently, the five bands of \eqref{eq:fiveterm_recurrence} satisfy
		\begin{equation}\label{eq:SS_band_leading}
			g_n^{(d)}=\widehat g^{(d)}n^2+O(n),
		\end{equation}
		with
		\[
		\bigl(
		\widehat g^{(+2)},\widehat g^{(+1)},\widehat g^{(0)},
		\widehat g^{(-1)},\widehat g^{(-2)}
		\bigr)
		=
		\left(
		\frac1{16},\frac14,\frac38,\frac14,\frac1{16}
		\right).
		\]
		The associated limiting characteristic polynomial is
		\begin{equation}\label{eq:SS_charpoly}
			\chi(\lambda)
			=
			\widehat g^{(-2)}\lambda^4
			+\widehat g^{(-1)}\lambda^3
			+\widehat g^{(0)}\lambda^2
			+\widehat g^{(+1)}\lambda
			+\widehat g^{(+2)}
			=
			\frac1{16}(\lambda+1)^4 .
		\end{equation}
        Thus all four characteristic roots coalesce at $\lambda = -1$. This quadruple root in \eqref{eq:SS_charpoly} has a severe numerical consequence for higher-order recurrence relations \cite{Wimp1984}: in contrast with standard three-term systems, the lack of modulus separation prevents a well-conditioned minimal tail extraction.
	\end{proposition}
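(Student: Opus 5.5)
The plan is to read off the leading $n^2$ part of each band directly from the operator decomposition \eqref{eq:M_decomp}, $M=(z-1)^2\Lambda_1+R_1(z)\Lambda_2+c_0(z)$, established in the proof of Theorem \ref{thm:SS_fiveterm}. The limits of the basis coefficients come first. From the closed forms \eqref{eq:ABC_coeffs}, with $\Omega\mapsto\nu$, $\Gamma\mapsto\gamma$, $\Delta\mapsto\delta$, each of $A_n,B_n,C_n$ is a ratio of two quadratics in $n$. The leading coefficients are $-1/4$, $2/4=1/2$ and $-1/4$ respectively. Moreover $A_n+\tfrac14$, $B_n-\tfrac12$ and $C_n+\tfrac14$ are $O(n^{-1})$, which is \eqref{eq:SS_ABC_asympt}. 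The same statement therefore holds for the shifted values $A_{n\pm1}$, $C_{n\pm1}$, and so on.

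Next I will sort the three summands of $My_n$ by their order in $n$. The term $c_0(z)y_n$ is a fixed quadratic polynomial acting by at most two applications of \eqref{eq:z_three}, so it contributes $O(1)$ to every band. For $R_1(z)\Lambda_2y_n$, the coefficients $A'_n,B'_n,C'_n$ are $O(n)$ by \eqref{eq:SS_ABCp_asympt}, and multiplication by the linear $R_1$ has bounded coefficients, so this term contributes only $O(n)$. Hence, to order $n^2$, $g_n^{(d)}$ equals $n(n+\nu)$ times the $y_{n+d}$-coefficient of $(z-1)^2y_n$. Since $n(n+\nu)=n^2+O(n)$, it is enough to find the limit of that coefficient.

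To compute these limits I will write multiplication by $z$ through the shift $S:y_n\mapsto y_{n+1}$. In the limit it acts as $-\tfrac14S+\tfrac12-\tfrac14S^{-1}$. The index shifts $A_{n+1}$ versus $A_n$ only produce $O(n^{-1})$ errors, and these become $O(n)$ after multiplication by $n^2$. Then $z-1\to-\tfrac14(S+2+S^{-1})$, and squaring gives
\[
(z-1)^2\to\tfrac1{16}\bigl(S^2+4S+6+4S^{-1}+S^{-2}\bigr).
\]
This yields $\widehat g^{(\pm2)}=\tfrac1{16}$, $\widehat g^{(\pm1)}=\tfrac14$ and $\widehat g^{(0)}=\tfrac38$, which proves \eqref{eq:SS_band_leading}. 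As a consistency check, the extreme-band formulas \eqref{eq:fiveterm_extremebands} give $(n+1)^2A_nA_{n+1}\sim n^2/16$ and $(n+\nu-1)^2C_nC_{n-1}\sim n^2/16$, as required.

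For the characteristic polynomial I will insert $c_n=\lambda^n$ into \eqref{eq:fiveterm_recurrence}, divide by $k^2\lambda^{k-2}$, and let $k\to\infty$ using \eqref{eq:SS_band_leading}. The band $g^{(+2)}_{k-2}$ multiplies $\lambda^{k-2}$ and gives the constant term, while $g^{(-2)}_{k+2}$ multiplies $\lambda^{k+2}$ and gives the $\lambda^4$ term. This produces exactly \eqref{eq:SS_charpoly}, and the coefficients $\tfrac1{16}(1,4,6,4,1)$ are binomial, so $\chi(\lambda)=\tfrac1{16}(\lambda+1)^4$. The concluding remark about numerical conditioning is interpretive: a quadruple root means the four independent solutions cannot be separated by geometric growth rate, which is the Poincaré–Perron setting of \cite{Wimp1984}. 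I would only note this and not prove anything further.

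I expect the main obstacle to be the bookkeeping showing that the $R_1\Lambda_2$ cross terms and the index-shift corrections really are $O(n)$ uniformly in each band. The difficulty is that $A'_n$ and $C'_n$ grow like $n$, so their cancellations must not be relied on at order $n^2$. I would handle this by bounding each band entry as a finite sum of products: $O(1)$ polynomial coefficients times at most one factor of order $n$ from $\Lambda_2$.
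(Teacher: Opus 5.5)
Your proposal is correct and follows essentially the same route as the paper. Like the paper, you isolate $n(n+\nu)(z-1)^2y_n$ via the decomposition \eqref{eq:M_decomp}, note that $R_1\Lambda_2y_n$ and $c_0y_n$ contribute only $O(n)$ and $O(1)$, and apply the limiting tridiagonal action of $z$ twice to get the binomial pattern $\tfrac1{16}(1,4,6,4,1)$. Your additional bookkeeping (the $O(n^{-1})$ control of index shifts, the $c_n=\lambda^n$ derivation of $\chi$, and the extreme-band consistency check) makes explicit what the paper's proof leaves implicit.
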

	
	\begin{proof}
		The leading contribution to \(M y_n\) comes from
		\[
		(z-1)^2\Lambda_1 y_n
		=
		n(n+\nu)(z-1)^2y_n.
		\]
		The term \(c_0(z)y_n\) contributes only at order \(O(1)\), while
		\[
		\bigl((2-\nu)z+2\kappa+\nu-2\bigr)\Lambda_2y_n
		\]
		contributes at order \(O(n)\), because the coefficients of \(\Lambda_2y_n\) grow
		linearly in \(n\). Hence the \(O(n^2)\) band structure is determined only by
		multiplication by \((z-1)^2=z^2-2z+1\).
		
		Using
		\[
		z\,y_n=A_ny_{n+1}+B_ny_n+C_ny_{n-1}
		\]
		twice, together with
		\[
		A_n,C_n\to-\frac14,
		\qquad
		B_n\to\frac12,
		\]
		one obtains
		\[
		(z-1)^2y_n
		\sim
		\frac1{16}y_{n+2}
		+\frac14y_{n+1}
		+\frac38y_n
		+\frac14y_{n-1}
		+\frac1{16}y_{n-2}.
		\]
		Multiplication by \(n(n+\nu)=n^2+O(n)\) gives
		\eqref{eq:SS_band_leading}. The factorization
		\[
		\frac1{16}
		\left(\lambda^4+4\lambda^3+6\lambda^2+4\lambda+1\right)
		=
		\frac1{16}(\lambda+1)^4
		\]
		then gives \eqref{eq:SS_charpoly}.
	\end{proof}
	
	The quadruple root in \eqref{eq:SS_charpoly} has an important numerical consequence.
	In the usual three-term Leaver setting, the characteristic roots are separated in
	modulus, so that one solution is genuinely minimal relative to the other. Here all
	four formal branches have the same leading geometric factor \((-1)^n\). They can
	separate only through algebraic or sub-exponential corrections. Thus minimal-solution
	selection is intrinsically ill-conditioned for the five-term hypergeometric recurrence.
	This explains why a scalar Gaussian-eliminated continued fraction is not a reliable
	QNM condition in this basis.
	
	\paragraph{Matrix continued fraction and Hill determinant tests.}
	To test whether the obstruction is merely algorithmic, we applied two independent
	procedures. First, pairing
	\[
	\mathbf C_j=(c_{2j},c_{2j+1})^{\mathsf T}
	\]
	turns the pentadiagonal recurrence into a three-term matrix recurrence
	\[
	\mathbf A_j\mathbf C_{j-1}
	+\mathbf B_j\mathbf C_j
	+\mathbf D_j\mathbf C_{j+1}=0.
	\]
	A matrix continued fraction then gives
	\[
	\mathbf C_j=\mathbf R_j\mathbf C_{j-1},
	\qquad
	\mathbf R_j
	=
	-\bigl(\mathbf B_j+\mathbf D_j\mathbf R_{j+1}\bigr)^{-1}\mathbf A_j,
	\]
	with the lower-end condition expressed by
	\[
	\det\bigl(\mathbf B_0+\mathbf D_0\mathbf R_1\bigr)=0.
	\]
	When evaluated at the independently validated Schwarzschild scalar QNM
	\[
	\omega_{\mathrm{QNM}}
	=
	0.967287744-0.193517552\,i,
	\]
	this residual does not converge to zero as the depth of the continued fraction is
	increased.
	
	Second, we formed the truncated pentadiagonal Hill matrix \(\mathbf G_N(\omega)\) by
	imposing
	\[
	c_{N+1}=c_{N+2}=0
	\]
	and solved
	\[
	\det \mathbf G_N(\omega)=0.
	\]
	This procedure avoids the explicit minimal-tail selection and probes the finite
	coefficient problem directly. It converges, but not to the Schwarzschild QNM.
	
	\begin{table}[htbp]
		\centering
		\caption{Two numerical probes of the five-term recurrence at the massless
			\(\ell=2\) scalar QNM. The matrix continued fraction does not vanish at the true
			QNM, while the Hill determinant converges to a different stable root.}
		\label{tab:fiveterm_qnm_failure}
		\renewcommand{\arraystretch}{1.15}
		\begin{tabular}{c c c c}
			\hline
			\multicolumn{2}{c}{matrix continued fraction}
			&
			\multicolumn{2}{c}{Hill determinant} \\
			\hline
			depth \(J\) & residual at \(\omega_{\mathrm{QNM}}\)
			&
			truncation \(N\) & root of \(\det \mathbf G_N=0\) \\
			\hline
			40  & \(1.6\times10^{1}\) & 32 & \(1.3116-0.1555\,i\) \\
			80  & \(3.8\times10^{1}\) & 40 & \(1.2713-0.1529\,i\) \\
			160 & \(2.7\times10^{1}\) & 48 & \(1.2580-0.1533\,i\) \\
			\hline
		\end{tabular}
	\end{table}
	
	The Hill determinant root stabilizes near
	\[
	\omega_\star\approx1.258-0.153\,i,
	\]
	whereas the true scalar QNM is
	\[
	\omega_{\mathrm{QNM}}\approx0.9673-0.1935\,i.
	\]
	Moreover, in the same computation the determinant at the true QNM is not smaller than
	at nearby generic points. Thus the five-term hypergeometric recurrence is detecting a different spectral condition, not the outgoing Schwarzschild QNM condition. This behavior is intimately connected to the non-self-adjoint nature of the underlying boundary-value problem and the structure of pseudospectra in open wave operators \cite{TrefethenEmbree2005,Jaramillo2021}, where finite truncations can stabilize on spurious or non-radiative spectral branches.
	
	\paragraph{Interpretation.}
	The reason is analytic rather than numerical. Each basis element
	\[
	y_n(z)={}_2F_1(-n,n+\nu;\gamma;z)
	\]
	is a polynomial and is therefore analytic at the finite coordinate endpoint \(z=1\).
	Any finite Hill truncation is consequently analytic at \(z=1\), and a sufficiently
	fast-decaying coefficient expansion is naturally tied to a finite-end connection
	condition in this polynomial basis. Such a condition is a legitimate connection
	problem for the compactified equation, but it is not the QNM condition.
	
	The QNM boundary condition is imposed at the irregular endpoint \(r=\infty\), which is
	the same point \(z=1\) in the compactified coordinate. In the gauge
	\[
	R(r)=e^{\kappa r}(r-1)^\rho H(z),
	\]
	the desired outgoing branch corresponds to a specified Thom\'e behavior of \(H\) at
	this irregular endpoint, while the excluded branch differs by an essential exponential
	factor of the form
	\[
	\exp\!\left(-\frac{2\kappa}{1-z}\right)
	\]
	up to algebraic powers. These branches are not characterized by ordinary analyticity
	at \(z=1\). Therefore a spectral condition imposed solely through coefficient
	minimality or polynomial truncation in a basis analytic at \(z=1\) cannot, by itself,
	recover the outgoing QNM condition. The QNM spectrum is instead governed by the
	irregular connection coefficient
	\[
	\mathcal C(\omega,m,\ell)=0
	\]
	introduced in Proposition~\ref{thm:SS_spectral_condition}.
	
	\paragraph{Relation with Leaver's method.}
	This conclusion is consistent with Leaver's continued-fraction approach. In the
	standard Frobenius/Jaff\'e construction, the expansion is adapted to the horizon and
	to the radiative behavior at infinity, and the resulting recurrence has a well-defined
	minimal solution whose continued fraction encodes the QNM condition. In that setting,
	additional bands may be removed by Gaussian elimination without changing the spectral
	problem. By contrast, the present Svartholm--Schmidt recurrence is adapted to the
	hypergeometric basis associated with the finite compactified endpoints. The number of
	terms is therefore not the decisive issue. The decisive issue is whether the chosen
	basis and recurrence encode the correct irregular-endpoint boundary condition.
	
	The five-term recurrence should thus be used as an exact tool for the
	Svartholm--Schmidt connection problem associated with the compactified equation, and
	as a useful numerical check on the hypergeometric band structure. It should not be
	identified with a direct QNM solver. Quasinormal frequencies in the present setting
	are more directly obtained from the Frobenius/Leaver recurrence, or by comparison with
	independent methods such as higher-order WKB, while the hypergeometric recurrence
	contributes finite-connection data that must still be coupled to the irregular
	connection coefficient \(\mathcal C\) to impose the radiative boundary condition.

Having clarified why the five-term Svartholm--Schmidt recurrence should not be used as
a direct QNM equation, we now record the corresponding Frobenius/Leaver benchmark for
the same radial equation.

	\subsection{Continued-fraction benchmark for quasinormal frequencies}
	\label{subsec:qnm_leaver_output}
	
	We finally include a direct continued-fraction benchmark for scalar quasinormal
	frequencies. The purpose of this computation is not to introduce a new spectral
	method, but to validate the compactified radial equation, the gauge normalization, and
	the Frobenius recurrence obtained from the physical reduced equation. The computation
	is carried out in the normalization \(2M=1\), so that the event horizon is located at
	\(r=1\), and
	\[
	z=\frac{r-1}{r}=1-\frac1r .
	\]
	We write
	\[
	R(r)=e^{\kappa r}(r-1)^\rho H(z),
	\qquad
	H(z)=\sum_{n\ge0}a_nz^n,
	\]
	and impose the quasinormal-mode branch choice
	\[
	\rho=-i\omega,
	\qquad
	\kappa=\sqrt{m^2-\omega^2},
	\]
	with the square-root branch chosen consistently with the outgoing condition at spatial
	infinity.
	
	Substitution into the physical reduced equation
	\[
	z(z-1)^3H''+c_1(z)H'+c_0(z)H=0
	\]
	gives the four-term Frobenius recurrence
	\begin{equation}\label{eq:qnm_fourterm_recurrence}
		A_n a_{n+1}+B_n a_n+C_n a_{n-1}+D_n a_{n-2}=0,
		\qquad
		a_{-1}=a_{-2}=0,
	\end{equation}
	where
	\begin{align}
		A_n&=-(n+1)(n+1+2\rho),\\
		B_n&=3n^2+(-2\kappa+4\rho+2)n
		+\ell(\ell+1)-2\kappa\rho-\kappa+m^2-2\omega^2+\rho+1,\\
		C_n&=-3n^2+(2\kappa-2\rho+2)n-\ell(\ell+1)-\kappa+\rho-1,\\
		D_n&=(n-1)^2.
	\end{align}
	Following the usual Leaver procedure, this four-term recurrence is reduced to an
	effective three-term recurrence by Gaussian elimination. The quasinormal frequencies
	are then obtained by imposing minimality of the resulting coefficient sequence, or
	equivalently by solving the associated inverted continued fraction in the complex
	\(\omega\)-plane.
	
	Table~\ref{tab:qnm_massless_fundamental} reports the fundamental massless scalar
	modes. Since most tabulations use the convention \(M\omega\), whereas our
	normalization has \(2M=1\), we report \(\omega/2\). The agreement with the standard
	Schwarzschild scalar values is at the level of \(10^{-7}\), providing a useful
	end-to-end check of the corrected radial equation, the compactified recurrence, the
	branch conventions, and the continued-fraction implementation.
	
	\begin{table}[htbp]
		\centering
		\caption{Massless scalar fundamental quasinormal modes in Schwarzschild spacetime.
			The values are reported as \(M\omega=\omega/2\).}
		\label{tab:qnm_massless_fundamental}
		\begin{tabular}{c c c c}
			\hline
			\(\ell\) & computed \(M\omega\) & reference \(M\omega\) & absolute difference \\
			\hline
			0 & \(0.110454939-0.104895717\,i\)
			& \(0.110455-0.104896\,i\)
			& \(2.9\times10^{-7}\) \\
			1 & \(0.292936133-0.097659989\,i\)
			& \(0.292936-0.097660\,i\)
			& \(1.3\times10^{-7}\) \\
			2 & \(0.483643872-0.096758776\,i\)
			& \(0.483644-0.096759\,i\)
			& \(2.6\times10^{-7}\) \\
			\hline
		\end{tabular}
	\end{table}
	
	For fixed \(\ell=2\), the first few massless overtones are shown in
	Table~\ref{tab:qnm_l2_overtones}. The modes are ordered by increasing damping rate
	\(|\operatorname{Im}(M\omega)|\). As expected, higher overtones have increasingly
	negative imaginary parts and therefore decay more rapidly.
	
	\begin{table}[htbp]
		\centering
		\caption{First massless scalar overtones for \(\ell=2\), ordered by increasing
			damping rate. Values are reported as \(M\omega=\omega/2\).}
		\label{tab:qnm_l2_overtones}
		\begin{tabular}{c c}
			\hline
			overtone \(n\) & computed \(M\omega\) \\
			\hline
			0 & \(0.483643872-0.096758776\,i\) \\
			1 & \(0.463850580-0.295603940\,i\) \\
			2 & \(0.430544050-0.508558400\,i\) \\
			3 & \(0.393875080-0.738128400\,i\) \\
			\hline
		\end{tabular}
	\end{table}
	

    We also tracked the fundamental $l=1$ mode for a massive scalar field by continuation in the mass parameter. The results are displayed in Table~\ref{tab:qnm_massive_l1}. Over the range shown, increasing the field mass raises the real part of the frequency and decreases the damping rate. This is the expected quasi-resonant trend well documented in the massive scalar field literature \cite{OhashiSakagami2004,KonoplyaZhidenko2005}: the mass term forms an effective potential well that traps the modes, producing longer-lived oscillations.
	
	\begin{table}[htbp]
		\centering
		\caption{Fundamental massive scalar mode for \(\ell=1\), computed by continuation in
			the field mass. Frequencies are given in the normalization \(2M=1\).}
		\label{tab:qnm_massive_l1}
		\begin{tabular}{c c c c}
			\hline
			\(m\) & \(\omega\) & \(\operatorname{Re}\omega\) & \(-\operatorname{Im}\omega\) \\
			\hline
			0.0 & \(0.5858722665-0.1953199778\,i\) & \(0.585872\) & \(0.195320\) \\
			0.1 & \(0.5881086314-0.1939759577\,i\) & \(0.588109\) & \(0.193976\) \\
			0.2 & \(0.5948313225-0.1899141472\,i\) & \(0.594831\) & \(0.189914\) \\
			0.3 & \(0.6060798071-0.1830411334\,i\) & \(0.606080\) & \(0.183041\) \\
			0.4 & \(0.6219138168-0.1731865712\,i\) & \(0.621914\) & \(0.173187\) \\
			\hline
		\end{tabular}
	\end{table}
	
	The massive computation should be interpreted with the standard branch caution. For
	the moderate masses in Table~\ref{tab:qnm_massive_l1}, continuation from the validated
	massless mode is stable and gives the expected qualitative behavior. Closer to the
	threshold at which \(\omega^2\) approaches \(m^2\), however, the square root
	\(\kappa=\sqrt{m^2-\omega^2}\) requires explicit branch tracking, and the
	quasi-resonant regime must be treated separately.

\section{Final Remarks}\label{sec:closing}

We have developed an explicit and computationally implementable framework for the
massive scalar radial equation on the Schwarzschild exterior. Starting from the
separated Klein--Gordon equation in the normalization \(2M=1\), we retained the
subleading term generated by the \(R(r)/r\) ansatz and obtained the corrected radial
equation \eqref{eq:RadialMonic_revised}. After factoring the horizon and infinity
behaviors through
\[
R(r)=e^{\kappa r}(r-1)^\rho H(r),
\qquad
\rho=\pm i\omega,
\qquad
\kappa^2=m^2-\omega^2,
\]
we derived the confluent-Heun-type equation of
Proposition~\ref{prop:HeunC_reduction}, with explicit accessory data. The subsequent
compactification
\[
z=\frac{r-1}{r}
\]
places the horizon at \(z=0\) and spatial infinity at \(z=1\), thereby making clear
that the physical reduced equation \eqref{eq:zform_correct} has a regular singular
point at the horizon and an irregular endpoint at infinity.

A central point of the paper is the distinction between the abstract standard
confluent-Heun equation and the compactified physical equation. For the standard
operator \eqref{eq:HeunC_Schw}, we developed the Svartholm--Schmidt machinery in a
self-contained way: the diagonal action of \(\Lambda_1\), the three-term contiguous
relations, the tridiagonal reduction, the induced coefficient recurrence, and the
minimal-tail analysis. In particular, the lower-end recurrence and the minimal tail
were separated carefully, leading to the corrected continued-fraction compatibility
condition \eqref{eq:SS_continued_fraction}. This condition selects the compatible
one-sided Svartholm--Schmidt branch; it is not, by itself, the full spectral condition
at an irregular endpoint. The latter is encoded by a separate connection coefficient
\[
\mathcal C(\omega,m,\ell),
\]
whose vanishing imposes the desired Thom\'e behavior after analytic continuation to
the irregular singular point.

For the physical compactified equation, the standard tridiagonal
Svartholm--Schmidt structure does not persist. The exact operator identity
\eqref{eq:M_decomp} shows that the factor \((z-1)^2\) in front of the diagonal
operator \(\Lambda_1\) spreads the action of the reduced operator across five
neighboring basis elements. This yields the genuine five-term recurrence of
Theorem~\ref{thm:SS_fiveterm}, with nonzero extreme bands
\eqref{eq:fiveterm_extremebands}. Thus the failure of the standard three-term
scheme is not an artefact of the calculation, but a structural consequence of the
fact that the compactification moves the irregular singular point to the finite
endpoint \(z=1\).

We also clarified the spectral meaning of this five-term recurrence. Numerically,
its band structure is reproduced to working precision, and its asymptotic
characteristic polynomial has the quadruple root
\[
\frac1{16}(\lambda+1)^4.
\]
This confluence prevents a well-conditioned scalar minimal-solution selection of the
usual Leaver type. More importantly, the hypergeometric basis
\[
y_n(z)={}_2F_1(-n,n+\nu;\gamma;z)
\]
is polynomial and hence analytic at \(z=1\). Consequently, coefficient minimality or
Hill truncation in this basis selects a finite-end regularity problem rather than the
radiative quasinormal-mode boundary condition at spatial infinity. The numerical
matrix-continued-fraction and Hill-determinant tests confirm this distinction: the
five-term recurrence detects a legitimate connection problem, but not the
Schwarzschild QNM spectrum. Quasinormal frequencies must instead be obtained from a
basis adapted to the outgoing irregular behavior, or by coupling the local data to the
connection condition \(\mathcal C(\omega,m,\ell)=0\).

For the explicit physical branch on the exterior, we therefore used the direct
Frobenius construction at the horizon,
\[
H(z)=\sum_{n\ge0}a_n z^n,
\qquad
H(0)=1,
\]
applied directly to the compactified equation \eqref{eq:zform_correct}. This produces
a horizon-normalized analytic amplitude on compact subintervals of \(0\le z<1\).
The accompanying symbolic and numerical implementation computes the coefficients,
reconstructs
\[
R(r)=e^{\kappa r}(r-1)^\rho
H\!\left(\frac{r-1}{r}\right),
\]
and validates the result against the original radial equation
\eqref{eq:RadialMonic_revised}. The residual diagnostics, convergence tests, and
Runge--Kutta comparison provide an end-to-end check of the reduction and of the local
series construction.

Finally, we benchmarked the same radial equation against the classical
Frobenius/Leaver continued-fraction computation of scalar quasinormal modes. The
four-term Frobenius recurrence obtained from the compactified physical equation,
after Gaussian elimination to an effective three-term recurrence, reproduces the
standard massless Schwarzschild scalar fundamental modes with errors of order
\(10^{-7}\) in the convention \(M\omega=\omega/2\). The massive \(\ell=1\) branch,
tracked by continuation in the field mass, displays the expected quasi-resonant
behavior: the oscillation frequency increases while the damping rate decreases over
the tested range. This confirms that the compactified equation, the corrected
accessory data, and the branch choices are consistent with the standard QNM
framework when the recurrence is adapted to the correct radiative boundary condition.

The resulting picture is therefore coherent. The Svartholm--Schmidt expansion gives a
precise and algebraically explicit finite-connection theory for the compactified
equation; its five-term recurrence explains why the standard three-term hypergeometric
scheme does not transfer directly to the physical Schwarzschild problem; the
Frobenius construction supplies a practical horizon-normalized local solution; and
the Leaver benchmark verifies the quasinormal-mode content through the appropriate
radiative recurrence. These components together provide a reliable basis for further
study of spectral, scattering, and connection problems for massive fields on
Schwarzschild backgrounds, including the dependence on \((\omega,m,\ell)\), the
analysis of quasi-resonant regimes, and the computation of connection coefficients
at the irregular endpoint.

\begin{acknowledgments}
J.G.R.Valangelis acknowledges the Brazilian funding agency CAPES for financial support through grant No.88887.375739/2026-00 (Doctoral Fellowship --- CAPES).
\end{acknowledgments}

\appendix

\appendix

\section{Self-Contained \textsc{Mathematica} Routine}
\label{app:code}

We provide here the complete, self-contained \textsc{Mathematica} implementation generating the horizon Frobenius coefficients, reconstructing the radial profile $R_N(r)$, and validating the solution against the original radial ODE:

\begin{lstlisting}[language=Mathematica,basicstyle=\ttfamily\scriptsize,keywordstyle=\color{blue}]
(* ======================================================================
Frobenius / power-series solution at z=0 for the PHYSICAL reduced
Schwarzschild equation (irregular point at z=1):

z (z-1)^3 H'' + c1(z) H' + c0(z) H = 0,   H(z)=Sum_{n>=0} a_n z^n,  a0=1,

c1(z) = 3 z^3 + (2k - 2r - 7) z^2 + (-2k + 4r + 5) z - (2r + 1),
c0(z) =       z^2 + (-L + k - r - 2) z + (L - 2 k r - k + m^2 - 2 w^2 + r + 1),

with k = kappa, r = rho, L = ell(ell+1), w = omega.

Outputs: aList={a0,...,aN}, HSeriesBuilder, RadialRBuilder,
and a residual check against the ORIGINAL radial ODE (eq:RadialMonic_revised).
====================================================================== *)

ClearAll["Global`*"];

(* ---------------------------
1) Physical parameters (provide numeric values)
--------------------------- *)
omega =.;   mass =.;   ell =.;      (* physical data *)
rhoPar   =.;                        (* +I omega or -I omega (horizon) *)
kappaPar =.;                        (* +-Sqrt[mass^2 - omega^2] (infinity) *)
Nmax  =.;                           (* truncation order *)

LL := ell (ell + 1);

(* ---------------------------
2) Polynomial coefficients of the physical z-form
--------------------------- *)
c2z[z_] := z (z - 1)^3;
c1z[z_] := 3 z^3 + (2 kappaPar - 2 rhoPar - 7) z^2
+ (-2 kappaPar + 4 rhoPar + 5) z - (2 rhoPar + 1);
c0z[z_] := z^2 + (-LL + kappaPar - rhoPar - 2) z
+ (LL - 2 kappaPar rhoPar - kappaPar + mass^2 - 2 omega^2 + rhoPar + 1);

opExpr[H_, z_] := c2z[z] D[H, {z, 2}] + c1z[z] D[H, z] + c0z[z] H;

(* ---------------------------
3) Frobenius coefficient generator (analytic branch at z=0)
Normalization a0 = 1; the coefficient of a_{k+1} is -(k+1)(k+1+2 rho).
--------------------------- *)
Options[FrobeniusCoefficients] = {WorkingPrecision -> 120};

FrobeniusCoefficients[Nt_Integer?NonNegative, OptionsPattern[]] :=
Module[{wp, z, a, k, Htr, expr, coeff, sol},
wp = OptionValue[WorkingPrecision];
z = Unique["z"]; Clear[a];
a[0] = SetPrecision[1, wp];
For[k = 0, k <= Nt - 1, k++,
Htr  = Sum[a[j] z^j, {j, 0, k + 1}];
expr = Expand[opExpr[Htr, z]];
coeff = Coefficient[expr, z, k];      (* matches power z^k *)
sol = Quiet@Solve[coeff == 0, a[k + 1]];
If[sol === {},
Print["ERROR: degenerate step at a", k + 1, " (resonance?)."];
Return[$Failed]];
a[k + 1] = SetPrecision[a[k + 1] /. sol[[1]], wp];
];
Table[a[j], {j, 0, Nt}]
];

(* ---------------------------
4) Truncated series H_N(z) and radial reconstruction R(r)
--------------------------- *)
HSeriesBuilder[aList_List] :=
Function[{z}, Sum[aList[[n + 1]] z^n, {n, 0, Length[aList] - 1}]];

RadialRBuilder[aList_List] :=
Function[{r},
Module[{z = (r - 1)/r, Hser = HSeriesBuilder[aList]},
Exp[kappaPar r] (r - 1)^rhoPar * Hser[z]]];

(* ---------------------------
5) Residual against the ORIGINAL radial ODE (eq:RadialMonic_revised):
R'' + R'/(r(r-1)) + V(r) R, with the corrected potential V.
--------------------------- *)
Vpot[r_] := omega^2 r^2/(r - 1)^2 - LL/(r (r - 1))
- mass^2 r/(r - 1) - 1/(r^2 (r - 1));

ResidualR[r0_?NumericQ, aList_List, wp_Integer : 60] :=
Module[{Rf, rr, val},
Rf[rr_] := RadialRBuilder[aList][rr];
val = D[Rf[rr], {rr, 2}] + D[Rf[rr], rr]/(rr (rr - 1)) + Vpot[rr] Rf[rr];
N[Abs[(val /. rr -> r0)/Rf[r0]], wp]];

(* ---------------------------
6) Example run: physical bound-state-like (decaying) mode
--------------------------- *)
(*
omega = 1/5; mass = 3/10; ell = 1;
rhoPar = -I omega;                      (* ingoing at horizon *)
kappaPar = -Sqrt[mass^2 - omega^2];     (* decaying at infinity *)
Nmax = 120;

aList = FrobeniusCoefficients[Nmax, WorkingPrecision -> 120];
Print["a0..a7 = ", N[Take[aList, 8], 10]];
Print["rel. residual at r=1.25 (z=0.2): ", ResidualR[5/4, aList]];
Print["rel. residual at r=2.00 (z=0.5): ", ResidualR[2,   aList]];

Rser = RadialRBuilder[aList];
Plot[{Re[Rser[r]], Im[Rser[r]], Abs[Rser[r]]}, {r, 1.02, 30}, PlotRange -> All];
*)
\end{lstlisting}

\bibliography{main}

\end{document}